\newtheorem{lemma}{Lemma}
\DeclareRobustCommand{\qed}{%
  \ifmmode 
  \else \leavevmode\unskip\penalty9999 \hbox{}\nobreak\hfill
  \fi
  \quad\hbox{\qedsymbol}}
\newcommand{\openbox}{\leavevmode
  \hbox to.77778em{%
  \hfil\vrule
  \vbox to.675em{\hrule width.6em\vfil\hrule}%
  \vrule\hfil}}
\newcommand{\qedsymbol}{\openbox}
\newenvironment{proof}[1][\proofname]{\par
  \normalfont
  \topsep6\p@\@plus6\p@ \trivlist
  \item[\hskip\labelsep\itshape
    #1.]\ignorespaces
}{%
  \qed\endtrivlist
}
\newcommand{\proofname}{Proof}
\renewcommand{\thesubsection}{\thesection.\Roman{subsection}}
\crefname{section}{Sec.}{Secs.}
\lstdefinestyle{inline}{basicstyle=\tt}
\DeclareSIUnit{\Hartree}{Ha}
\definecolor{myorange}{HTML}{FE6100}
\definecolor{myblue}{HTML}{64A5FF}
\definecolor{mypink}{HTML}{FF1972}
\definecolor{mygreen}{HTML}{62D09D}
\definecolor{myviolet}{HTML}{4F428E}
\definecolor{myyellow}{HTML}{FFBF00}
\renewcommand{\thesubsection}{\Roman{section} \Alph{subsection}}
\def\p@subsection{}
\def\p@subsubsection{}
\newcommand{\pvec}[1]{\vec{#1}\mkern2mu\vphantom{#1}}
\def\tauRange{\mathcal T}
\def\tRange{\left[T\right]}
\def\qubits{\left\llbracket n\right\rrbracket}
\begin{document}

\title{
From virtual \texorpdfstring{$Z$}{Z} gates to virtual \texorpdfstring{$Z$}{Z} pulses
}

\author{Christopher K. Long}
\email{ckl45@cam.ac.uk}
\affiliation{Cavendish Laboratory, Department of Physics, University of Cambridge, JJ Thomson Avenue, Cambridge, CB3 0US, United Kingdom}
\affiliation{Hitachi Cambridge Laboratory, J. J. Thomson Avenue, Cambridge, CB3 0US, United Kingdom}
\author{Crispin H. W. Barnes}
\affiliation{Cavendish Laboratory, Department of Physics, University of Cambridge, JJ Thomson Avenue, Cambridge, CB3 0US, United Kingdom}

\date{\today}

\begin{abstract}
Virtual $Z$ gates have become integral for implementing fast, high-fidelity single-qubit operations. However, virtual $Z$ gates require that the system's two-qubit gates are microwave-activated or normalise the single-qubit $Z$ rotations---the group generated by $X$, $\operatorname{SWAP}$, and arbitrary phase gates. Herein, we extend the theory of virtual $Z$ gates to the pulse-level, which underlies both gate design and the recent advancements of pulse-level quantum algorithms. These algorithms attempt to utilise the full potential of present-day noisy intermediate-scale quantum (NISQ) devices by removing overheads associated with the compilation and transpilation of gates. To extend the theory of virtual $Z$ gates, we derive a platform-agnostic theoretical framework for virtual $Z$ pulses by employing time dilations of the pulse sequences that control the quantum processor. Additionally, we provide worked examples of the implementation of virtual $Z$ pulses on both semiconductor spin qubit and superconducting quantum processor architectures. Moreover, we present a general overview of the hardware support for virtual $Z$ pulses. We find virtual $Z$ pulses (and thus, virtual $Z$ gates) can be used on hardware that, with previous methods, did not support the virtual $Z$ gate. Finally, we present two additional applications of virtual $Z$ pulses to pulse-level algorithms. First, broadening the class of Hamiltonians that can be natively simulated in an analogue manner. Second, increasing the expressibility of pulse-based variational quantum algorithms.
\end{abstract}

\maketitle

\section{Introduction}

Digital quantum computing frameworks frequently employ virtual $Z$ gates \cite{PhysRevA.96.022330,PhysRevB.83.121403,PhysRevA.77.012307,RevModPhys.76.1037} to obtain fast, high-fidelity single- \cite{Li2023,STASIUK2024107688} and two-qubit \cite{Danin_2025} operations. Virtual $Z$ gates can be implemented instantaneously by updating the rotating frame, have unit fidelity, and allow universal quantum computation \cite{PhysRevA.96.022330} when supplemented with only an $X$ $\frac{\pi}{2}$-rotation and a single entangling gate \cite{PhysRevLett.75.346,PhysRevA.52.3457}.

Virtual $Z$ gates and the general method of frame tracking \cite{PRXQuantum.5.020338,PhysRevResearch.6.013235} have been demonstrated on a variety of hardware platforms: such as in NMR \cite{Knill2000368,STASIUK2024107688}, superconducting qubits \cite{PhysRevA.96.022330,Li2023,PhysRevResearch.3.L042007,PhysRevLett.126.210504,PhysRevLett.132.060602,PhysRevX.13.031035,Moskalenko2022,PhysRevX.11.021026,PhysRevResearch.4.023040,PhysRevLett.127.200502,PhysRevApplied.14.044039,PhysRevX.14.041050,PhysRevLett.127.080505,PRXQuantum.4.010314,Ye_2021,PhysRevLett.125.240503,Xu_2021,Goss2022,PhysRevLett.129.060501,PhysRevResearch.2.033447,PhysRevX.11.041032,PRXQuantum.5.020326,PRXQuantum.5.020338,PRXQuantum.6.010349,PhysRevApplied.20.024011,Kim2022,PhysRevApplied.14.014072}, semiconductor spin qubits \cite{doi:10.1126/sciadv.abn5130,Wu2024,doi:10.1021/acs.nanolett.4c05540,Watson2018,wu2025simultaneoushighfidelitysinglequbitgates}, neutral atoms \cite{doi:10.1126/science.abo6587}, and trapped ions \cite{PhysRevA.77.012307,Postler2022}. Often, high-fidelity multi-qubit gates on these platforms employ the virtual $Z$ gate for phase corrections. Examples include CPHASE \cite{PhysRevLett.132.060602,PhysRevX.13.031035,Moskalenko2022,PhysRevX.11.021026,PhysRevResearch.4.023040,PhysRevLett.127.200502,PhysRevApplied.14.044039,PhysRevX.14.041050,PhysRevLett.127.080505,PRXQuantum.4.010314,Ye_2021,PhysRevLett.125.240503,Xu_2021,Goss2022,PhysRevLett.129.060501,PhysRevResearch.2.033447,PhysRevX.11.041032,doi:10.1126/sciadv.abn5130,Wu2024,Watson2018}, CCPHASE \cite{PhysRevApplied.14.014072,PhysRevA.98.052318}, CNOT \cite{PRXQuantum.6.010349,PhysRevLett.129.060501,PhysRevApplied.20.024011,Wu2024}, Toffoli \cite{PhysRevApplied.14.014072,PhysRevA.98.052318}, $i$Toffoli \cite{Kim2022}, SWAP \cite{PRXQuantum.5.020338,doi:10.1021/acs.nanolett.4c05540}, power of $i$SWAP \cite{PhysRevResearch.2.033447,PhysRevX.11.041032,PRXQuantum.5.020326,PRXQuantum.5.020338}, and power of $b$SWAP \cite{PRXQuantum.5.020326,PRXQuantum.5.020338}. Further, virtual $Z$ gates can be utilised in dynamical decoupling \cite{PRXQuantum.6.020348,PhysRevApplied.22.054074,PhysRevApplied.18.024068}, to mitigate cross-talk \cite{PhysRevApplied.18.024068,10.1063/5.0200014,10.1063/5.0115393,wu2025simultaneoushighfidelitysinglequbitgates}, and to enable gate set tomography protocols \cite{PhysRevLett.133.120802}.

In practice, digital quantum computers are underpinned by controllable quantum systems upon which digital quantum gates are engineered by careful pulse design \cite{Glaser2015,Mahesh2023,Müller_2022,Barnes_2022}. Additionally, there has been a recent and promising push to leverage the power of today's noisy intermediate-scale quantum (NISQ) \cite{Preskill2018quantumcomputingin} devices by designing algorithms at the pulse level \cite{doi:10.1126/science.abo6587,PRXQuantum.2.010101,lloyd2020quantumpolardecompositionalgorithm,lloyd2021hamiltoniansingularvaluetransformation,PhysRevApplied.23.024036,Meitei2021,PhysRevResearch.5.033159,9996174,10.3389/frqst.2023.1273581,PhysRevD.111.034506,PhysRevX.7.021027,PhysRevApplied.19.064071,PhysRevLett.118.150503,Lu2017,Long2025}. These pulse-level algorithms can be executed orders of magnitude faster than their gate-based counterparts \cite{PhysRevApplied.19.064071,Long2025}, thus minimising decoherence---a crippling issue for many gate-based NISQ algorithms \cite{Dalton2024,PhysRevA.109.042413,Wang2021,PRXQuantum.4.010309,StilckFrança2021}.

Additionally, the unit fidelity of virtual $Z$ gates makes them a useful tool for enabling gate-based NISQ algorithms. Unfortunately, as virtual $Z$ gates are gates, they are not compatible with pulse-based algorithms. Nonetheless, some hybrid pulse--gate based algorithms are already using virtual $Z$ gates \cite{PhysRevResearch.5.033159,doi:10.1126/science.abo6587}.

In this article, we extend the theory of virtual $Z$ gates to the pulse level: virtual $Z$ pulses. Our extended theory relies extensively on the application of time dilations. This effect is achieved by distorting the pulse sequence to effectively implement $Z$ pulses. Previous works have used time dilations to design robust gates \cite{Rimbach-Russ_2023,polat2025pulseshapingultrafastadiabatic}. Interestingly, more hardware platforms support virtual $Z$ pulses than gates. Virtual $Z$ gates require that two-qubit gates are either microwave-activated or normalise single-qubit $Z$ rotations.\footnote{The normalisers group $\mathcal N$ of a group $L\subseteq\operatorname{U}(2^n)$ is $\mathcal N\coloneqq\left\{u\in\operatorname{U}(2^n):uL=Lu\right\}$.} Instead, a weaker constraint on the Hamiltonian algebra arises for virtual $Z$ pulses. This allows us to derive a wider array of virtual $Z$--compatible two-qubit gates. We give explicit examples of the applications of virtual $Z$ gates in semiconductor spin qubits \cite{RevModPhys.95.025003} and tunable-coupler, flux-tunable, and cross-resonance superconducting qubits \cite{10.1063/1.5089550}. Further, we consider a couple of potential applications of virtual $Z$ pulses to pulse-level algorithms: variational quantum algorithms and Hamiltonian simulation. Our work extends the native controls available to both gate- and pulse-level quantum-algorithm design.

The remainder of this article will be structured as follows: \Cref{sec: results} outlines the mathematical framework abstracted away from a specific hardware platform. \Cref{sec: examples} outlines four specific hardware implementations: semiconductor spin qubit architectures (\cref{sec: semiconductor}) and superconducting qubits with tunable coupling (\cref{sec: tunable couplers}), tunable flux (\cref{sec: tunable flux}), and cross-resonance coupling (\cref{sec: cr}). In \cref{sec: gate design} we give a more general overview of the wider hardware support of virtual $Z$ pulses. Next, we discuss three applications: virtual $Z$ gates (\cref{sec: gate design}), variational quantum algorithms (\cref{sec: VQAs}), and Hamiltonian simulation (\cref{sec: Hamiltonian simulation}). Finally, we conclude in \cref{sec: conclusion}.

\section{Theory}\label{sec: results}

In this section, we outline the theory of virtual $Z$ pulses (\cref{sec: theory}) before applying the theory to 2-local Hamiltonians (\cref{sec: local Hamiltonian}).

\subsection{Theory of virtual \texorpdfstring{$Z$}{Z} pulses}
\label{sec: theory}
\subsubsection{The virtual frame}
Suppose we have a time-dependent system Hamiltonian $H\left(t\right)$, but desire to implement
\begin{equation}\label{eq: effective Hamiltonian}
    H_{\textrm{eff}}\left(t\right)\coloneqq H\left(t\right)+\frac{1}{2}\sum_{i=1}^{n}v_i\left(t\right)Z_i
\end{equation}
over the time interval $t\in\tRange\coloneqq\left[0, T\right]$, where $Z_i$ is the Pauli-$Z$ operator on the $i$th qubit. We will achieve this by virtually introducing the drive terms $\frac{1}{2}\sum_{i=1}^nv_i\left(t\right)Z_i$. The unitary that implements the desired time evolution is the solution to the Schr\"odinger equation:
\begin{equation}
  \dv{t}U\left(t\right)=-\frac{i}{\hbar}\left[H\left(t\right)+\frac{1}{2}\sum_{i=1}^{n}v_i\left(t\right)Z_i\right]U\left(t\right),\quad\forall t\in\tRange.
\end{equation}
Making the substitution
\begin{equation}
  U\left(t\right)=R\left(t\right)\tilde U\left(t\right),\quad\forall t\in\tRange,
\end{equation}
with
\begin{align}
    &R\left(t\right)\coloneqq \exp\left(-\frac{i}{2\hbar}\sum_{i=1}^{n}V_i\left(t\right)Z_i\right)\\\textrm{and }&V_i\left(t\right)\coloneqq V_{i0}+\int_0^t\dd{x}v_i\left(x\right)\quad\forall i\in\qubits,
\end{align}
where $\qubits$ is the set of positive integers 1 through $n$ inclusive, we move into \textit{the virtual frame}. In which we find
\begin{equation}\label{eq: simple differential equation}
  \dv{t}\tilde U\left(t\right)=-\frac{i}{\hbar}R^\dagger\left(t\right)H\left(t\right)R\left(t\right)\tilde U\left(t\right),\quad\forall t\in\tRange.
\end{equation}
The effective Hamiltonian in the virtual frame is
\begin{equation}\label{eq: simple Hamiltonian}
    \tilde H(t)\coloneqq R^\dagger\left(t\right)H\left(t\right)R\left(t\right),\quad\forall t\in\tRange.
\end{equation}
In other words, the virtual frame is the unique rotating frame, up to a constant rotation, within which the effective Hamiltonian is instantaneously unitarily equivalent to our system Hamiltonian $H\left(t\right)$ at all times. The constant rotation,
\begin{equation}
    \exp\left(-\frac{i}{2\hbar}\sum_{i=1}^{n}V_{i0}Z_i\right),
\end{equation}
allows prior virtual $Z$ gates to be pulled through the evolution.

Thus, providing that we physically implement $\tilde H(t)$ in the laboratory frame, then we can implement the evolution from a time $t=0$ to $t=T$ under the Hamiltonian $H_{\textrm{eff}}\left(t\right)$. This can be achieved as follows: Evolve the system with $\tilde H \left(t\right)$ from a time $t=0$ to $t=T$ to induce the unitary evolution $\tilde U \left(T\right)$. Then, Apply the $Z$ rotations $R(T)$. In many cases, these $Z$ rotations can be absorbed into the measurement operators or can be implemented as virtual $Z$ gates \cite{PhysRevA.96.022330,PhysRevB.83.121403,PhysRevA.77.012307,RevModPhys.76.1037}.

\begin{figure*}
    \centering
    \centerline{
    \Qcircuit @C=1em @R=2em {
        \bm{(a)}&                                                                                               &                                                                                               &     &&        && \bm{(b)}\\
                & t\in\left[0,T_1\right] \barrier[-6.05em]{3}                                                   & t\in\left[T_1,T_2\right]                                                                      &     &&        &&          & \tau\in\left[\tau_0, \tau_1\right] \barrier[-2.75em]{3} & t\in\left[0, T_1\right] \barrier[-2.5em]{3} & \tau\in\left[\tau_1, \tau_2\right] \barrier[-2.75em]{3} & t\in\left[T_1, T_2\right]   & \\
      \lstick{1}& \multigate{1}{\displaystyle H_{12}\left(t\right)+\frac{1}{2}\sum_{i=1}^2v_i\left(t\right)Z_i} & \gate{        \displaystyle H_{1 }\left(t\right)+\frac{1}{2}            v_1\left(t\right)Z_1} & \qw &&        &&          & \multigate{1}{H_{12}'\left(\tau\right)}                 & \gate{\frac{1}{2}v_1(t)Z_1}                 & \gate{        H_{1 }'\left(\tau\right)}                 & \gate{\frac{1}{2}v_1(t)Z_1} & \qw \\
      \lstick{2}& \ghost{       \displaystyle H_{12}\left(t\right)+\frac{1}{2}\sum_{i=1}^2v_i\left(t\right)Z_i} & \multigate{1}{\displaystyle H_{23}\left(t\right)+\frac{1}{2}\sum_{i=2}^3v_i\left(t\right)Z_i} & \qw && \equiv &&          & \ghost{       H_{12}'\left(\tau\right)}                 & \gate{\frac{1}{2}v_2(t)Z_2}                 & \multigate{1}{H_{23}'\left(\tau\right)}                 & \gate{\frac{1}{2}v_2(t)Z_2} & \qw \\
      \lstick{3}& \gate{        \displaystyle H_{3 }\left(t\right)+\frac{1}{2}            v_3\left(t\right)Z_3} & \ghost{       \displaystyle H_{23}\left(t\right)+\frac{1}{2}\sum_{i=2}^3v_i\left(t\right)Z_i} & \qw &&        &&          & \gate{        H_{3 }'\left(\tau\right)}                 & \gate{\frac{1}{2}v_3(t)Z_3}                 & \ghost{       H_{23}'\left(\tau\right)}                 & \gate{\frac{1}{2}v_3(t)Z_3} & \qw
    }}
    \vspace{2em}
    \centerline{
    \Qcircuit @C=1em @R=2em {
        \bm{(c)}\\
                & \tau\in\left[\tau_0, \tau_1\right] \barrier[-2.75em]{3} & \tau'\in\left[\tau_1', \tau_2'\right] \barrier[-2.75em]{3} & t\in\left[0, T_2\right]      & \\
                & \multigate{1}{H_{12}'\left(\tau\right)}                & \gate{        H_{1 }''\left(\tau'\right)}                  & \gate{\frac{1}{2}v_1(t)Z_1} & \qw \\
 \lstick{\equiv}& \ghost{       H_{12}'\left(\tau\right)}                & \multigate{1}{H_{23}''\left(\tau'\right)}                  & \gate{\frac{1}{2}v_2(t)Z_2} & \qw \\
                & \gate{        H_{3 }'\left(\tau\right)}                & \ghost{       H_{23}''\left(\tau'\right)}                  & \gate{\frac{1}{2}v_3(t)Z_3} & \qw
    }}
    \caption{\textbf{Circuit diagrams for an example virtual $Z$ pulse sequence.} Each box in the circuit diagram represents the unitary operator generated by the evolution of the system under the time-dependent Hamiltonian labelling the box. The time period over which the Hamiltonian is applied is annotated above each column. Boxes that act on multiple qubit lines imply the Hamiltonian contains an interaction term between the qubits. \textbf{(a)} A circuit diagram for the desired evolution. The native system Hamiltonian is $H_{12}(t)+H_3(t)$ for $t\in\left[0,T_1\right]$ and $H_1(t)+H_{23}(t)$ for $t\in\left[T_1,T_2\right]$. In addition, we wish to apply $\frac{1}{2}\sum_{i=1}^3v_i(t)Z_i$ for $t\in\left[0,T_2\right]$ virtually. In the time period $t\in\left[0,T_1\right]$ only qubits 1 and 2 interact while for $t\in\left[T_1,T_2\right]$ only qubits 2 and 3 interact. \textbf{(b)} By employing our virtual $Z$ method outlined in the Article we can factor out the virtual $Z$ terms by modifying the Hamiltonians to their primed counterparts and potentially employing a time-dilation: $t=f(\tau)$ such that $\tau_0,\tau_1,\tau_2=f^{-1}(0, T_1, T_2)$. The factored out $Z$ rotations could now either be implemented physically or as virtual $Z$ gates. \textbf{(c)} Alternatively, we can apply our virtual $Z$ method again to pull all the virtual $Z$ terms to the end of the circuit by modifying the new Hamiltonians to their double primed counterparts and potentially employing a new time-dilation: $\tau'=f'(\tau)$ such that $\tau_1',\tau_2'=f'^{-1}(\tau_1,\tau_2)$.}
    \label{fig: three qubit example}
\end{figure*}
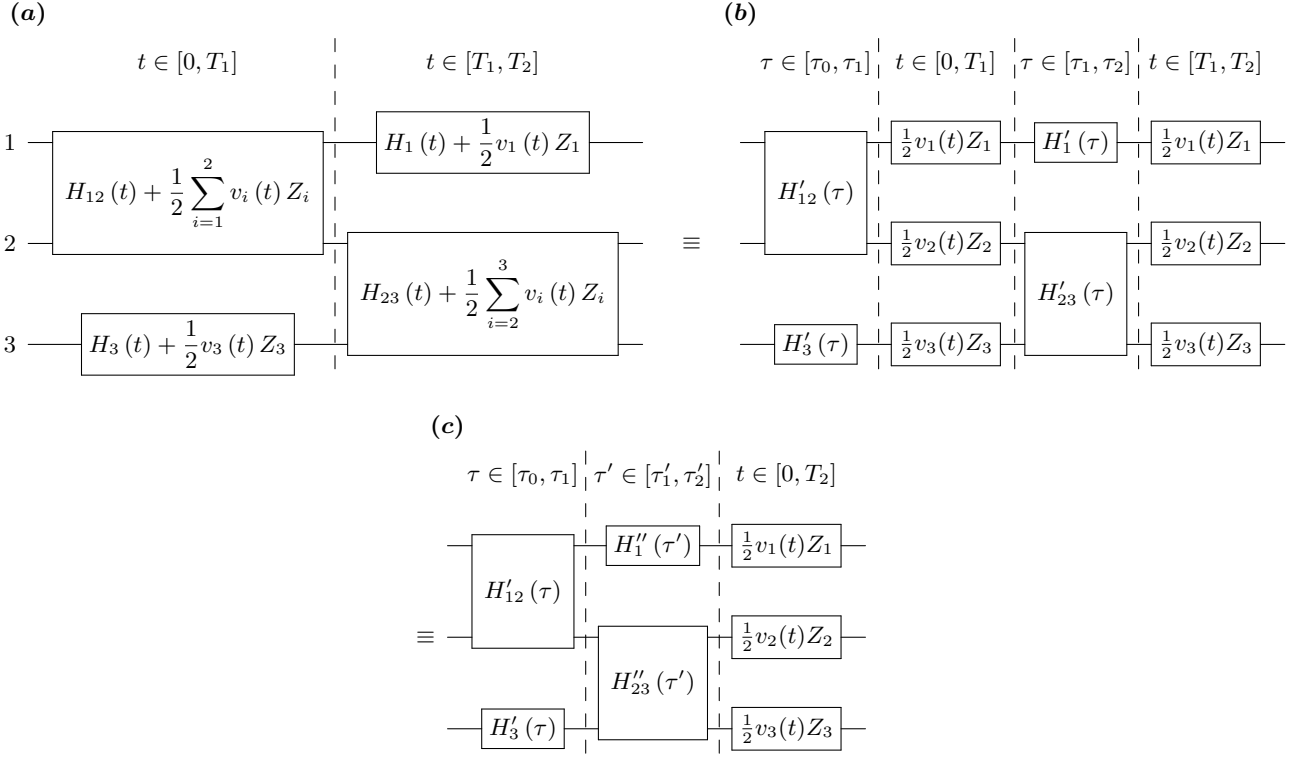

\subsubsection{Time dilation}
However, it may be that we cannot physically implement $\tilde H(t)$. Nonetheless, we may still be able to employ a time dilation, $f$, to implement $\tilde U\left(T\right)$. Let
\begin{equation}\label{eq: def f}
f\colon\tauRange\coloneqq\left[\tau_0,\tau_1\right]\subset\mathbb R\to f\left(\mathcal T\right)\subset\mathbb R;\quad f\colon\tau\mapsto t
\end{equation}
such that $f(\tau_0)=0$, $f(\tau_1)=T$, and $f$ and $\dv{f}{\tau}$ are continuous on $\tauRange$. Thus, $\tRange\subseteq f(\tauRange)$.
Substituting \cref{eq: def f} into \cref{eq: simple differential equation}, we find
\begin{equation}\label{eq: time dialated Schrodinger equation}
  \dv{\tau}\left[\tilde U\circ f\left(\tau\right)\right]=-\frac{i}{\hbar}\dv{f}{\tau}\left[R^\dagger HR\tilde U\right]\circ f\left(\tau\right),\quad\forall\tau\in\tauRange,
\end{equation}
where $g\circ h\left(x\right)\coloneqq g\left(h\left(x\right)\right)$ denotes functional composition. To perform the change of variables, we require $f$ and $\dv{f}{\tau}$ to be continuous on the interval $\tauRange$. Thus, if we can physically implement the Hamiltonian
\begin{equation}\label{eq: modified Hamiltonian}
  H'\left(\tau\right)=\dv{f}{\tau}\left[R^\dagger HR\right]\circ f\left(\tau\right),\quad\forall\tau\in\tauRange,
\end{equation}
for some $f$, then we can implement $\tilde U\left(T\right)$ by first evolving the system from a time $\tau=0$ to $\tau=f^{-1}\left(T\right)$ under the Hamiltonian $H'\left(\tau\right)$: see \cref{fig: three qubit example}.

\subsubsection{Rotating frames}
\label{sec: rotating frame}

Importantly, the Hamiltonian that implements the virtual $Z$ pulse is not unique, but depends on the rotating frame within which we apply the time dilation. Specifically, consider transforming into the frame rotating with the unitary $\hat O(t)$ such that
\begin{equation}\label{eq: commutation relations}
    \left[\hat O(t), Z_i\right]=0\quad\forall i\in\qubits,\ t\in\tRange.
\end{equation}
In this rotating frame, we will denote operators as $\hat\bullet$. The system Hamilonian becomes
\begin{equation}
    \hat H(t)=\hat O^\dagger(t)H(t)\hat O(t)-i\hbar\hat O^\dagger(t)\dv{\hat O}{t},\quad\forall t\in\tRange.
\end{equation}
Similarly, the effective Hamiltonian and the effective Hamiltonian in the virtual frame transform as
\begin{align}
    \hat H_{\textrm{eff}}(t)&=\hat O^\dagger(t)H_{\textrm{eff}}(t)\hat O(t)-i\hbar\hat O^\dagger(t)\dv{\hat O}{t}\\
    \textrm{and}\quad\hat{\tilde H}(t)&=\hat O^\dagger(t)\tilde H(t)\hat O(t)-i\hbar\hat O^\dagger(t)\dv{\hat O}{t},
\end{align}
for all $t\in\tRange$, respectively.
However, using the commutation relations in \cref{eq: commutation relations} we find
\begin{align}
    \hat H_{\textrm{eff}}(t)&=\hat H\left(t\right)+\sum_{i=1}^{n}v_i\left(t\right)Z_i\\
    \textrm{and}\quad\hat{\tilde H}(t)&=R^\dagger\left(t\right)\hat H\left(t\right)R\left(t\right),
\end{align}
for all $t\in\tRange$. Thus, \cref{eq: effective Hamiltonian,eq: simple Hamiltonian} are invariant under such a transformation.

Finally, consider transforming $H'\left(t\right)$ into the \textit{dilated rotating frame}, ${\hat O\circ f\left(\tau\right)}$:
\begin{multline}
    \hat H'\left(\tau\right)=\hat O^\dagger\circ f\left(\tau\right) H'\left(\tau\right)\hat O\circ f\left(\tau\right)\\-i\hbar\hat O^\dagger\circ f\left(\tau\right)\dv{f}{\tau}\left.\dv{\hat O}{t}\right|_{t=f\left(\tau\right)},\quad\forall\tau\in\tauRange.
\end{multline}
Now, substituting in \cref{eq: modified Hamiltonian} we find
\begin{equation}
    \hat H'\left(\tau\right)=\dv{f}{\tau}\left[R^\dagger\hat HR\right]\circ f\left(\tau\right),\quad\forall\tau\in\tauRange.
\end{equation}
Therefore, the Hamiltonian we physically implement in the rotating frame [\cref{eq: modified Hamiltonian}] is also invariant under such a transformation. Transforming back to the laboratory frame, we find
\vbox{\begin{multline}
    H'_{\hat O}(\tau)\coloneqq \hat O\left(\tau\right) \hat H'\left(\tau\right)\hat O^\dagger\left(\tau\right)\\-i\hbar\hat O\left(\tau\right)\left.\dv{\hat O^\dagger}{t}\right|_{t=\tau},\quad\forall\tau\in\tauRange.
\end{multline}}
Thus, for any dilation $f(\tau)\ne\tau$, the Hamiltonian we utilise to implement a virtual $Z$ pulse depends on the rotating frame within which it was solved for.

\subsection{Application to 2-local Hamiltonians}
\label{sec: local Hamiltonian}

In this section, we solve for the Hamiltonian we physically implement, $H'\left(\tau\right)$, to perform virtual $Z$ pulses, under the assumption that our system Hamiltonian is 2-local. 2-local Hamiltonians model a wide array of quantum computing platforms, including semiconductor spin qubits \cite{RevModPhys.95.025003} and superconducting qubits \cite{10.1063/1.5089550}. Explicit examples are outlined in section \cref{sec: examples}. More specifically, in the frame rotating with the unitary
\begin{equation}\label{eq: K def}
    K\left(t\right)\coloneqq\exp(-i\frac{1}{2}\sum_{i=1}^n\phi_i\left(t\right)Z_i),
\end{equation}
we will assume our system has a Hamiltonian of the form
\begin{equation}\label{eq: general Hamilonian}
  H\left(t\right)=\frac{1}{2}\sum_{i=1}^n\vec r_i\left(t\right)\cdot\vec\sigma_i+\sum_{\substack{{i,j=1}\\{:i< j}}}^nJ_{ij}\left(t\right)\mathcal E_{ij}\left(t\right)\quad\forall t\in\tRange,
\end{equation}
where $\vec r_i\left(t\right)$ modulates the Pauli vector $\vec\sigma_i\coloneqq\left(X_i,Y_i,Z_i\right)$ acting on the $i$th qubit. Further, $J_{ij}\left(t\right)$ modulates the interaction Hamiltonian $\mathcal E_{ij}\left(t\right)$ between the $i$th and $j$th qubits in the rotating frame:
\begin{equation}
  \mathcal E_{ij}\left(t\right)\coloneqq K^\dagger\left(t\right)\mathcal E_{ij}\left(0\right)K\left(t\right).
\end{equation}

Our task now is to substitute \cref{eq: general Hamilonian} into \cref{eq: modified Hamiltonian} and re-express $H'\left(\tau\right)$ in the same form as \cref{eq: general Hamilonian}:
\begin{multline}
    H'\left(\tau\right)=\frac{1}{2}\sum_{i=1}^n\pvec r_i'\left(\tau\right)\cdot\vec\sigma_i+\sum_{\substack{{i,j=1}\\{:i< j}}}^nJ_{ij}'\left(\tau\right)\mathcal E_{ij}\left(\tau\right)\quad\forall\tau\in\tauRange.
\end{multline}
Note the subtle difference between the coupling $\mathcal E_{ij}\left(\tau\right)$ (the function $\mathcal E_{ij}$ evaluated at $\tau$) and the dialated coupling $\mathcal E_{ij}\circ f\left(\tau\right)$ [the function $\mathcal E_{ij}$ evaluated at $f\left(\tau\right)$].

First, we will consider the single-qubit term. Using the identities:
\begin{align}
  e^{i\theta Z}X e^{-i\theta Z}&\equiv\cos(2\theta)X-\sin(2\theta)Y,\\
  e^{i\theta Z}Y e^{-i\theta Z}&\equiv\cos(2\theta)Y+\sin(2\theta)X,
\end{align}
we find
\begin{multline}
    \pvec r'_i\left(\tau\right)=\dv{f}{\tau}\!\begin{bmatrix}
    \cos(\frac{V_i\circ f\left(\tau\right)}{\hbar})&-\sin(\frac{V_i\circ f\left(\tau\right)}{\hbar})&0\\
    \sin(\frac{V_i\circ f\left(\tau\right)}{\hbar})&\cos(\frac{V_i\circ f\left(\tau\right)}{\hbar})&0\\
    0&0&1
    \end{bmatrix}\!\vec r_i\circ f\left(\tau\right)\\\forall i\in\qubits,\ \tau\in\tauRange.
\end{multline}
At this point, $f\left(\tau\right)$ is only constrained by any system-specific constraints on $\pvec r_i'\left(\tau\right)$. However, we will find that a non-zero $J_{ij}\left(t\right)$ will in general fix $f\left(\tau\right)$.

Second, we will consider the interaction term for a single pair of qubits $i,j\in\qubits$ such that $i\ne j$. We need to find solutions for $J'_{ij}$ and $f$ to the equation
\begin{align}
    \dv{f}{\tau}\left[J_{ij}R^\dagger\mathcal E_{ij}R\right]\circ f\left(\tau\right)=J'_{ij}\left(\tau\right)\mathcal E_{ij}\left(\tau\right)\quad\forall\tau\in\tauRange.\\
    \nonumber
\end{align}
As $R\circ f\left(\tau\right)$ is unitary, we can factor the equation into two parts, the scalar magnitude and the unitary rotations on $\mathcal E_{ij}\left(0\right)$:
\begin{align}
    J'_{ij}\left(\tau\right)&=\dv{f}{\tau}J_{ij}\circ f\left(\tau\right),\label{eq: J equation}\\\nonumber\\
    \mathcal E_{ij}\left(\tau\right)&=\left[R^\dagger\mathcal E_{ij}R\right]\circ f\left(\tau\right),\label{eq: f equation}
\end{align}
for all $\tau\in\tauRange$, respectively. Therefore, once we solve for $f\left(\tau\right)$ using \cref{eq: f equation} we can subsitute into \cref{eq: J equation} to solve for $J_{ij}'\left(\tau\right)$. To solve \cref{eq: f equation}, we first define, with respect to the Hilbert-Schmidt inner product, the following orthonormal basis of Hermitian operators acting on a four-dimensional Hilbert space
\begin{equation}
    \mathcal B\coloneqq\mathcal Z\cup\mathcal C_+\cup\mathcal C_-\cup\mathcal Q_i\cup\mathcal Q_j,
\end{equation}
divided into the five subspaces
\begin{align}
    \mathcal Z&\coloneqq\left\{\textrm{id.},Z_i,Z_j,Z_iZ_j\right\},\\
    \mathcal C_\pm&\coloneqq\left\{X_iX_j\mp Y_iY_j,X_iY_j\pm Y_jX_i\right\},\\
    \mathcal Q_k&\coloneqq\left\{X_k,Y_k,X_kZ_{\bar k},Y_kZ_{\bar k}\right\},
\end{align}
where $k\in\left\{i,j\right\}$ and $\bar k$ is the complimentary index to $k$. Decomposing $\mathcal E_{ij}\left(t\right)$ in this basis, we find the coefficients,
\begin{equation}
    a_B(t)\coloneqq\frac{1}{4}\Trace(B\mathcal E_{ij}\left(t\right))\quad\forall B\in\mathcal B,
\end{equation}
in each subspace have the following dependencies on $\phi_i\left(t\right)$ and $\phi_j\left(t\right)$:
\begin{align}
    a_z(t)&\textrm{ is a constant}&&\forall z\in\mathcal Z,\\
    a_c(t)&\textrm{ is a function of }\phi_i\left(t\right)\pm\phi_j\left(t\right)&&\forall c\in\mathcal C_{\pm},\\
    a_q(t)&\textrm{ is a function of }\phi_k&&\forall q\in\mathcal Q_k.
\end{align}
Thus, depending on the support of $\mathcal E_{ij}\left(0\right)$, \cref{eq: f equation} simplifies to the following scalar equations
\begin{widetext}
\begin{subnumcases}{\label{eq: f equations}}
    \textrm{all $f\left(\tau\right)$ are solutions}&$\mathcal E_{ij}\left(0\right)\in\operatorname{span}_{\mathbb R}\mathcal Z$,\\
    \Delta^{\pm}\phi_{ij}\left(\tau\right)=\Delta^{\pm}\phi_{ij}\circ f\left(\tau\right)+\tfrac{1}{\hbar}\Delta^{\pm}V_{ij}\circ f\left(\tau\right)\mod 4\pi&$\mathcal E_{ij}\left(0\right)\in\operatorname{span}_{\mathbb R}\mathcal Z\cup\mathcal C_{\pm}$,\\
    \phi_k\left(\tau\right)=\phi_k\circ f\left(\tau\right)+\tfrac{1}{\hbar}V_k\circ f\left(\tau\right)\mod 4\pi&$\mathcal E_{ij}\left(0\right)\in\operatorname{span}_{\mathbb R}\mathcal Z\cup\mathcal Q_k$,\\
    \begin{cases}
        \phi_i\left(\tau\right)=\phi_i\circ f\left(\tau\right)+\tfrac{1}{\hbar}V_i\circ f\left(\tau\right)\mod 4\pi\\
        \phi_j\left(\tau\right)=\phi_j\circ f\left(\tau\right)+\tfrac{1}{\hbar}V_j\circ f\left(\tau\right)\mod 4\pi
    \end{cases}&otherwise,
\end{subnumcases}
for all $\tau\in\tauRange$, where $\Delta^{\pm}\phi_{ij}\left(\tau\right)\coloneqq\phi_i\left(\tau\right)\pm\phi_j\left(\tau\right)$ and $\Delta^{\pm}V_{ij}\left(\tau\right)\coloneqq V_i\left(\tau\right)\pm V_j\left(\tau\right)$. These equations then admit the following solutions:
\begin{subnumcases}{f\left(\tau\right)=\label{eq: f solution}}
    \textrm{any function}&$\mathcal E_{ij}\left(0\right)\in\operatorname{span}_{\mathbb R}\mathcal Z$,\\
    \left(\Delta^{\pm}\phi_{ij}+\tfrac{1}{\hbar}\Delta^{\pm}V_{ij}+4m\pi\right)^{-1}\circ\Delta^{\pm}\phi_{ij}\left(\tau\right)&$\mathcal E_{ij}\left(0\right)\in\operatorname{span}_{\mathbb R}\mathcal Z\cup\mathcal C_{\pm}$,\\
    \left(\phi_k+\tfrac{1}{\hbar}V_k+4m\pi\right)^{-1}\circ\phi_k\left(\tau\right)&$\mathcal E_{ij}\left(0\right)\in\operatorname{span}_{\mathbb R}\mathcal Z\cup\mathcal Q_k$,\\
    \left(\phi_i+\tfrac{1}{\hbar}V_i+4m\pi\right)^{-1}\circ\phi_i\left(\tau\right)&$\begin{multlined}\mathcal E_{ij}\left(0\right)\not\in\left[\operatorname{span}_{\mathbb R}\mathcal Z\cup\mathcal C_\pm\right]\cup\left[\operatorname{span}_{\mathbb R}\mathcal Z\cup\mathcal Q_k\right]\\\textrm{and }\left(\phi_i+\tfrac{1}{\hbar}V_i\right)^{-1}\circ\phi_i\left(\tau\right)=\left(\phi_j+\tfrac{1}{\hbar}V_j\right)^{-1}\circ\phi_j\left(\tau\right),\end{multlined}$\label{eq: special case}\\
    \textrm{no solution}&otherwise,
\end{subnumcases}
for all $\tau\in\tauRange$ and any $m\in\mathbb Z$, where $g^{-1}$ is the inverse function of the function $g$.

Further, it will be useful later to have a general expression for $\dv{f}{\tau}$ in terms of $f\left(\tau\right)$. We can find this by differentiating \cref{eq: f equations}:
\begin{subnumcases}{\dv{f}{\tau}=\label{eq: derivative}}
    \textrm{no general form in terms of $\phi_k$ and $V_k$}&$\mathcal E_{ij}\left(0\right)\in\operatorname{span}_{\mathbb R}\mathcal Z$,\\
    \frac{\Delta^{\pm}\dot\phi_{ij}\left(\tau\right)}{\Delta^{\pm}\dot\phi_{ij}\circ f\left(\tau\right)+\tfrac{1}{\hbar}\Delta^{\pm}v_{ij}\circ f\left(\tau\right)}&$\mathcal E_{ij}\left(0\right)\in\operatorname{span}_{\mathbb R}\mathcal Z\cup\mathcal C_{\pm}$,\\
    \frac{\dot\phi_k\left(\tau\right)}{\dot\phi_k\circ f\left(\tau\right)+\tfrac{1}{\hbar}v_k\circ f\left(\tau\right)}&$\mathcal E_{ij}\left(0\right)\in\operatorname{span}_{\mathbb R}\mathcal Z\cup\mathcal Q_k$,\\
    \frac{\dot\phi_i\left(\tau\right)}{\dot\phi_i\circ f\left(\tau\right)+\tfrac{1}{\hbar}v_i\circ f\left(\tau\right)}&$\begin{multlined}\mathcal E_{ij}\left(0\right)\not\in\left[\operatorname{span}_{\mathbb R}\mathcal Z\cup\mathcal C_\pm\right]\cup\left[\operatorname{span}_{\mathbb R}\mathcal Z\cup\mathcal Q_k\right]\\\textrm{and }\left(\phi_i+\tfrac{1}{\hbar}V_i\right)^{-1}\circ\phi_i\left(\tau\right)=\left(\phi_j+\tfrac{1}{\hbar}V_j\right)^{-1}\circ\phi_j\left(\tau\right),\end{multlined}$\\
    \textrm{no solution}&otherwise,
\end{subnumcases}
for all $\tau\in\tauRange$.
\end{widetext}
If the inverse function in \cref{eq: f solution} is a one-to-many map, then a branch must be picked so that it is a function. Note that for $\mathcal E_{ij}\left(0\right)\in\left[\operatorname{span}_{\mathbb R}\mathcal Z\cup\mathcal C_\pm\right]\cup\left[\operatorname{span}_{\mathbb R}\mathcal Z\cup\mathcal Q_k\right]$ there always exists a solution for $f\left(\tau\right)$. However, for $\mathcal E_{ij}\left(0\right)\not\in\left[\operatorname{span}_{\mathbb R}\mathcal Z\cup\mathcal C_\pm\right]\cup\left[\operatorname{span}_{\mathbb R}\mathcal Z\cup\mathcal Q_k\right]$ only specific $V_i\left(t\right)$ and $V_j\left(t\right)$ admit solutions for $f\left(\tau\right)$.

\cref{eq: f solution}(a) imples that when $\mathcal E_{ij}\left(0\right)\in\mathcal Z$ no time dilation is required. Thus, virtual $Z$ pulses can be implemented without changing the evolution time, much like the instantaneous virtual $Z$ gate. However, when $\mathcal E_{ij}\left(0\right)\not\in\mathcal Z$, a time dilation is, in general, required. Physically, the time dilation accelerates or decelerates the evolution such that the accumulation of phases in the modified Hamiltonian, $H'(\tau)$, compensates for the phases arising from the virtual $Z$ pulses. However, in cases (b) and (c) in \cref{eq: f equations,eq: f solution}, the dilation only depends on a linear combination of the virtual $Z$ pulses on each qubit. This linear combination of the virtual $Z$ pulses cannot be implemented without altering the evolution time. This leaves a linearly independent combination of the virtual $Z$ pulses on each qubit that can be varied without changing the evolution time, thus retaining the instantaneous behaviour of virtual $Z$ gates.

Unfortunately, it becomes complicated to extend this to more than two qubits, as every pair of qubits with non-zero $J_{ij}\left(t\right)$ must admit the same solution for $f\left(\tau\right)$. A workaround is to only couple pairs of qubits at once---\textit{i.e.}, if $J_{ij}\left(t\right)$ is non-zero, then $J_{il}\left(t\right)$ should be zero for all $l\ne j$. We can split our evolution into layers such that in each layer, only pairs of qubits are coupled in any given layer---see \cref{fig: three qubit example}a. Then, for each pair of qubits we can have a separate time dilation $f\left(\tau\right)$---see \cref{fig: three qubit example}b. This, however, means the evolution of each pair will now finish at a different time. For those that finished earlier, we set all driving fields to zero until the last pair of qubits has completed its evolution. Any drift phase accumulated in this time, along with  $R(T)$, can be absorbed into the $V_{i0}$ for the next layer. After all pairs have completed their evolution, we can move on to the next layer. In this way, we can pull our virtual $Z$ pulses all the way through to the end of the evolution---see \cref{fig: three qubit example}c.

\begin{figure*}
    \includegraphics{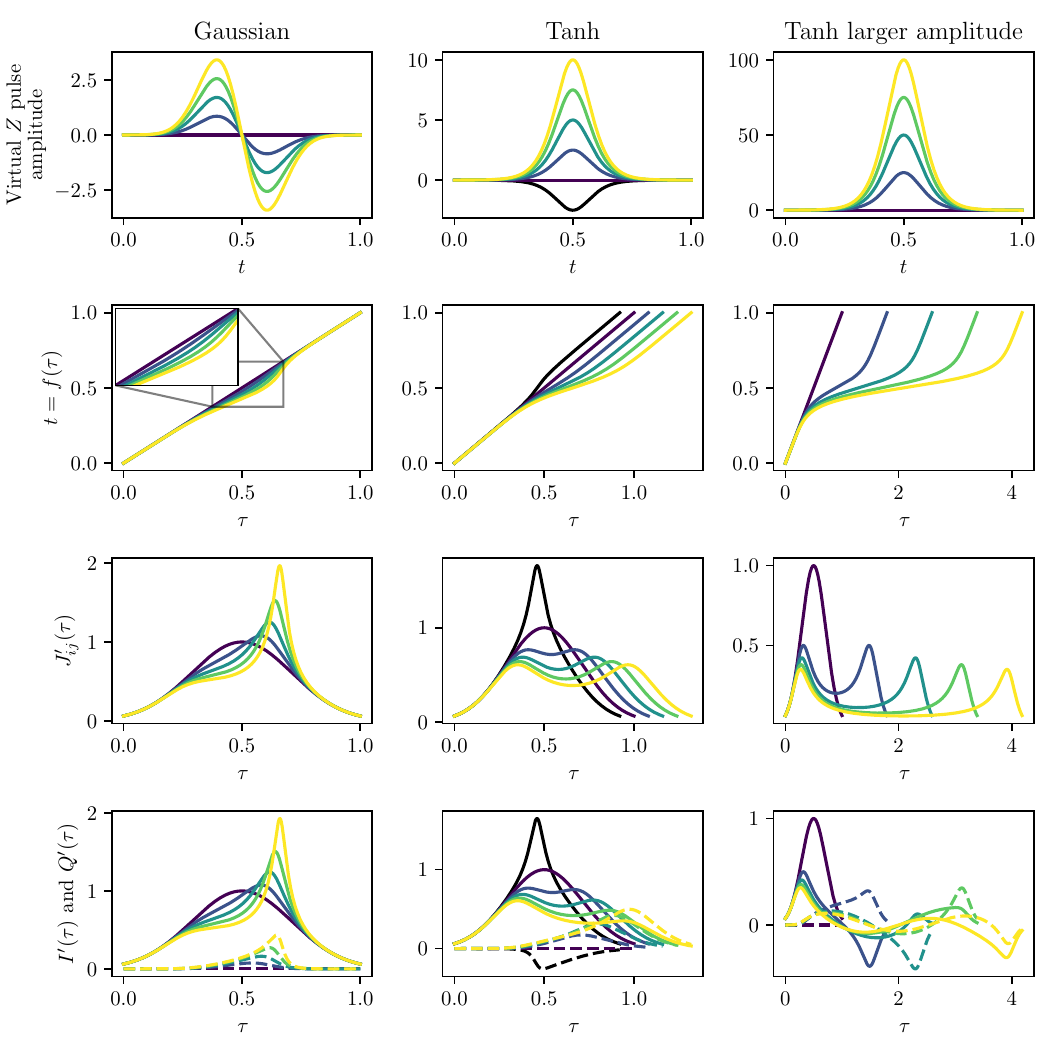}
    \caption{\textbf{Dilated pulses corresponding to virtual $Z$ pulses.} \textbf{(Top row)} The desired virtual $Z$ pulses ($\Delta^\pm_{ij} v(t)$ if $\mathcal E_{ij}\left(0\right)\in\operatorname{span}_{\mathbb R}\mathcal Z\cup\mathcal C_\pm$ or $v_k(t)$ if $\mathcal E_{ij}\left(0\right)\in\operatorname{span}_{\mathbb R}\mathcal Z\cup\mathcal Q_k$), in units with $\hbar=1$, with varying amplitudes: (left) $t\exp(-[(t-0.5)/0.15]^2)$ and (centre and right) $\sech^2([t-0.5]/0.1)$. Thus, the accumulated virtual phases as a function of time are a Gaussian and a hyperbolic tangent, respectively. The centre and right-hand columns differ only by the amplitude, with the right-hand column displaying larger amplitudes. \textbf{(Second row from the top)} The time-dilation $f(\tau)$ [\cref{eq: f solution}] required to implement the virtual $Z$ pulse in the top row. We have taken $\Delta^\pm_{ij} \phi(t)=t$ if $\mathcal E_{ij}\left(0\right)\in\operatorname{span}_{\mathbb R}\mathcal Z\cup\mathcal C_\pm$ or $\phi_k(t)=t$ if $\mathcal E_{ij}\left(0\right)\in\operatorname{span}_{\mathbb R}\mathcal Z\cup\mathcal Q_k$. The black curve in the centre column is an example of a virtual $Z$ pulse that requires time compression. \textbf{(Third row from the top)} The distorted $J_{ij}'(\tau)$ pulse that implements a Gaussian (standard deviation 0.3) $J_{ij}(t)$ pulse in the presence of the desired simultaneous virtual $Z$ pulse. The purple curve corresponds to no virtual $Z$ pulse---\textit{i.e.}, the undistorted pulse. \textbf{(Bottom row)} The distorted $I'_i(\tau)$ (solid) and $Q'_i(\tau)$ (dashed) pulses that implement a Gaussian (standard deviation 0.3) $I_i(t)$ pulse in the presence of the desired simultaneous virtual $Z$ pulse---assuming the virtual $Z$ pulse is applied only to qubit $i$. Note the $J_{ij}(t)$, $I(\tau)$, and $Q(\tau)$ pulses can all be implemented simultaneously.}
    \label{fig: example}
\end{figure*}

\subsubsection{The drift Hamiltonian}
\label{sec: drift}

In \cref{sec: rotating frame} we discovered that the solutions in \cref{sec: local Hamiltonian} will depend on the rotating frame chosen. In this section, we will see that by constraining the single qubit drives, $\vec r_i\left(t\right)$, we are forced to chose the rotating frame of the qubits. If the $z$-component of $\vec r_i\left(t\right)$ [denoted $\hat z\cdot\vec r_i\left(t\right)$] is a control parameter, then we could trivially implement the virtual $Z$ term in the Hamiltonian by absorbing it into $\hat z\cdot\vec r_i\left(t\right)$. The most restrictive constraint would be forcing $\hat z\cdot\vec r_i\left(t\right)$ to be a fixed function of time---\textit{i.e.}, a drift Hamiltonian: potentially time-dependent, but uncontrollable. Thus,
\begin{equation}
    \hat z\cdot\pvec r_i'\left(\tau\right)=\hat z\cdot\vec r_i\left(\tau\right)\quad\forall\tau\in\tauRange.
\end{equation}
However, this precludes any time-dilation $f\left(\tau\right)$ that does not satisfy
\begin{align}
    \hat z\cdot\vec r_i\left(\tau\right)&=\dv{f}{\tau}\hat z\cdot\vec r_i\circ f\left(\tau\right)\quad\forall\tau\in\tauRange\label{eq: f constraint}\\
    \iff h\left(\tau\right)+C&=h\circ f\left(\tau\right)\quad\forall C\in\mathbb R,\ \tau\in\tauRange,\\
    &\textrm{where }h\left(\tau\right)\coloneqq\int_0^\tau\dd{x}\hat z\cdot \vec r_i\left(x\right)\\
    \implies f\left(\tau\right)&=h^{-1}\left(h\left(\tau\right)+C\right)\quad\forall C\in\mathbb R,\ \tau\in\tauRange.\label{eq: f constrained solution}
\end{align}

Thus, by substituting \cref{eq: f constrained solution} into \cref{eq: f equations} and rearanging we find the following constraints

\begin{widetext}
\begin{subnumcases}{\label{eq: v constraints}}
    \textrm{no constraint}&$\mathcal E_{ij}\left(0\right)\in\operatorname{span}_{\mathbb R}\mathcal Z$,\\
    \Delta^{\pm}V_{ij}(t)=\hbar\left[\Delta^{\pm}\phi_{ij}\circ h^{-1}(h(t)-C)-\Delta^{\pm}\phi_{ij}(t)\right]&$\mathcal E_{ij}\left(0\right)\in\operatorname{span}_{\mathbb R}\mathcal Z\cup\mathcal C_{\pm}$,\\
    V_k(t)=\hbar\left[\phi_k\circ h^{-1}(h(t)-C)-\phi_k(t)\right]&$\mathcal E_{ij}\left(0\right)\in\operatorname{span}_{\mathbb R}\mathcal Z\cup\mathcal Q_k$,\\
    \begin{cases}
        V_i(t)=\hbar\left[\phi_i\circ h^{-1}(h(t)-C)-\phi_i(t)\right]\\
        V_j(t)=\hbar\left[\phi_j\circ h^{-1}(h(t)-C)-\phi_j(t)\right]
    \end{cases}&otherwise,
\end{subnumcases}
\end{widetext}
for all $t\in f(\tauRange)$.

If $h$ is injective, then Eqs. (\ref{eq: v constraints}b--d) constrain a linear combination, $l(t)$, of $v_i(t)$ and $v_j(t)$ to a family of functions parameterised by $C$. To allow for more freedom in the virtual $Z$ pulses one can apply, we need $h$ to be non-injective. For each $\tau\in\tauRange$ for which $\hat z\cdot\vec r_i(\tau)=0$, $h^{-1}$ gains an additional branch. If $\hat z\cdot\vec r_i(\tauRange)=0$ for a countable subset of $\tauRange$, then $l(t)$ will be constrained to a union of a countable number of families of functions, each parameterised by one real parameter $C$. To allow for higher-dimensional parameterisations, we require $\hat z\cdot\vec r_i(\tau)=0$ for some finite period of time. This motivates working in the rotating frame of the drift Hamiltonian.

With this theory in place, we present numerically computed pulse distortions for a range of virtual $Z$ pulses in \cref{fig: example}. In \cref{eq: derivative} we can see that as $\Delta^{\pm}v_{ij}\to-\hbar\Delta^{\pm}\dot\phi_{ij}$ if $\mathcal E_{ij}\left(0\right)\in\operatorname{span}_{\mathbb R}\mathcal Z\cup\mathcal C_{\pm}$ or $v_{k}\to-\hbar\dot\phi_{k}$ if $\mathcal E_{ij}\left(0\right)\in\operatorname{span}_{\mathbb R}\mathcal Z\cup\mathcal Q_k$ then $\dv{f}{\tau}\to\pm\infty$. This divergence will cause both the slew rate and pulse amplitude to diverge. In \cref{fig: example}, we observe this behaviour: In the left and centre columns, we see that the more negative the virtual $Z$ pulse, the larger the gradient of the time dilation and the greater the pulse amplitude. Thus, amplitude constraints on a device will constrain how negative a virtual $Z$ pulse one can implement. Interestingly, one can make the virtual $Z$ pulse arbitrarily large and positive without violating device amplitude constraints. Importantly, \cref{fig: example} is hardware agnostic and utilises arbitrary units and so applies to both the semiconducting spin qubit and superconducting qubit examples discussed in the next section.

\section{Example platforms}
\label{sec: examples}

In this section, we consider applying virtual $Z$ pulses on two common quantum computing platforms: semiconductor spin qubits (\cref{sec: semiconductor}) and superconducting qubits (\cref{sec: superconducting}). To allow readers to jump straight to their platform of interest, both \cref{sec: semiconductor,sec: superconducting} are self-contained.

In the following examples, virtual $Z$ pulses can only be applied approximately---under the rotating wave approximation. This will arise in two ways: First, single-qubit drives often do not have enough degrees of freedom to implement virtual $Z$ pulses: we need independently controllable $X$ and $Y$ terms for each qubit in the Hamiltonian. However, the rotating wave approximation allows us to employ frequency multiplexing to obtain sufficient degrees of freedom. Second, by applying the increasingly coarse rotating wave approximations to $\mathcal E_{ij}\left(t\right)$, we will project out $\mathcal C_+$, then, $\mathcal Q_i$ and $\mathcal Q_j$, and finally $\mathcal C_-$ from $\mathcal E_{ij}\left(0\right)$, respectively.

\subsection{Semiconductor spin qubits}
\label{sec: semiconductor}

Semiconductor spin qubits can be modelled with a variety of approaches \cite{RevModPhys.95.025003,PhysRevA.101.022329,PhysRevA.96.052305}. We will focus on the Heisenberg Hamiltonian \cite{RevModPhys.95.025003} in the laboratory frame:
\begin{multline}
    H_{\ce{Si}}\left(t\right)\coloneqq-\frac{\hbar}{2}\sum_{i=1}^n\omega_iZ_i-\sum_{i=1}^n\Omega_i\left(t\right)X_i\\+\frac{1}{4}\sum_{\substack{{i,j=1}\\{:i<j}}}^nJ_{ij}\left(t\right)\left[X_iX_j+Y_iY_j+Z_iZ_j\right].
\end{multline}
Thus, $\mathcal E_{ij}\left(0\right)=X_iX_j+Y_iY_j+Z_iZ_j\in\operatorname{span}_{\mathbb R}\mathcal Z\cup\mathcal C_-$.

It is common to work in the rotating frame of the qubits \cite{RevModPhys.95.025003},
\begin{equation}
    \hat K_{\ce{Si}}(t)\coloneqq\exp(i\frac{1}{2}\sum_{i=1}^n\omega_iZ_it),
\end{equation}
in which the Hamiltonian becomes \cite{RevModPhys.95.025003}
\begin{multline}
    \hat{H}_{\ce{Si}}\left(t\right)=-\sum_{i=1}^n\Omega_i\left(t\right)\left[\cos(\omega_it)X_i+\sin(\omega_it)Y_i\right]\\+\frac{1}{4}\sum_{\substack{{i,j=1}\\{:i<j}}}^nJ_{ij}\left(t\right)\mathcal E_{ij}\left(t\right),
\end{multline}
where $\phi_i\left(t\right)=-\omega_i t$. Unfortunately, $\Omega_i\left(t\right)$ does not provide sufficient degrees of freedom to implement virtual $Z$ pulses, as we need a degree of freedom for each $X_i$ and $Y_i$. However, we can use the rotating wave approximation to implement approximate virtual $Z$ pulses. To achieve this, we consider either a local or a global single-qubit drive with coupling strengths $\mu_i$:
\begin{equation}\label{eq: g def}
    \Omega_i\!\left(t\right)\!\coloneqq\!\begin{cases}
        \!\mu_i\left[I_i\!\left(t\right)\cos(\omega_it)+Q_i\!\left(t\right)\sin(\omega_it)\right]&\textrm{local,}\\
        \displaystyle\!\mu_i\sum_{j=1}^n\left[I_j\!\left(t\right)\cos(\omega_jt)+Q_j\!\left(t\right)\sin(\omega_jt)\right]&\textrm{global,}
    \end{cases}
\end{equation}
for all $i\in\qubits$.

Typically, $I_i\left(t\right)$ and $Q_i\left(t\right)$ are known; if this is not the case then see \cref{app: solution} for formulae for $I_i\left(t\right)$ and $Q_i\left(t\right)$ in terms of $\Omega_i\left(t\right)$. Substituting \cref{eq: g def} into $\hat{H}_{\ce{Si}}$ and ignoring oscillating terms, we find
\begin{multline}
    \hat{H}_{\ce{Si}}\left(t\right)\approx-\frac{1}{2}\sum_{i=1}^n\mu_i\left[I_i\left(t\right)X_i+Q_i\left(t\right)Y_i\right]\\+\frac{1}{4}\sum_{\substack{{i,j=1}\\{:i<j}}}^nJ_{ij}\left(t\right)\mathcal E_{ij}\left(t\right).
\end{multline}
Thus, to apply approximate virtual $Z$ pulses $v_i(t)$ and $v_j(t)$ on a pair of qubits $i$ and $j$ we find
\begin{align}
    &f\left(\tau\right)+\frac{\Delta^-V_{ij}\circ f\left(\tau\right)}{\hbar\left(\omega_i-\omega_j\right)}=\tau,\\
    &\dv{f}{\tau}=\frac{1}{1+\dfrac{\Delta^-v_{ij}\circ f\left(\tau\right)}{\hbar\left(\omega_i-\omega_j\right)}}.
\end{align}
Thus, the new controls are
\begin{align}
    J_{ij}'\left(\tau\right)\!&=\!\frac{J_{ij}\circ f\left(\tau\right)}{1+\dfrac{\Delta^-v_{ij}\circ f\left(\tau\right)}{\hbar\left(\omega_i-\omega_j\right)}},\\
    \begin{bmatrix}
        I_k'\left(\tau\right)\\
        Q_k'\left(\tau\right)
    \end{bmatrix}\!&=\!\frac{\begin{bmatrix}
    \cos(\!\frac{V_k\circ f\left(\tau\right)}{\hbar}\!)&-\sin(\!\frac{V_k\circ f\left(\tau\right)}{\hbar}\!)\\
    \sin(\!\frac{V_k\circ f\left(\tau\right)}{\hbar}\!)&\phantom{-}\cos(\!\frac{V_k\circ f\left(\tau\right)}{\hbar}\!)
    \end{bmatrix}\!\!\!\begin{bmatrix}
        I_k\circ f\left(\tau\right)\\
        Q_k\circ f\left(\tau\right)
    \end{bmatrix}}{1+\dfrac{\Delta^-v_{ij}\circ f\left(\tau\right)}{\hbar\left(\omega_i-\omega_j\right)}}.
\end{align}

Finally, we numerically quantify the magnitude of the error introduced by the rotating wave approximation. We rescale the pulses in \cref{fig: example} such that the qubit frequencies are $18$ and $18.03$ GHz, giving a pulse duration of $33.\dot3$ ns. We also rescale $J$ by $10$ MHz and $I$ and $Q$ by $2.6$ MHz. These values are roughly inline with those presented in Ref. \cite{doi:10.1126/sciadv.abn5130}. The infidelities for the pulse shapes in the left column and the remaining two columns are $4.7\times10^{-7}$ and $1.1\times10^{-5}$, respectively---see \cref{app: infidelities} for more details.

\subsection{Superconducting qubits}
\label{sec: superconducting}

A wide array of superconducting qubits exists \cite{10.1063/1.5089550}. Herein, we limit our considerations to superconducting qubits with tunable couplers (\cref{sec: tunable couplers}), tunable flux (\cref{sec: tunable flux}), and cross-resonance coupling (\cref{sec: cr}).
\subsubsection{Superconducting qubits with tunable couplers}
\label{sec: tunable couplers}

First, we consider the simplest case of superconducting qubits with tunable couplers \cite{10.1063/1.5089550}. The Hamiltonian in the laboratory frame, under the two-level approximation, is \cite{10.1063/1.5089550}
\begin{multline}
    H_{\textrm{tc}}\left(t\right)\coloneqq\frac{\hbar}{2}\sum_{i=1}^n\omega_iZ_i+\sum_{i=1}^n\Omega_i\left(t\right)Y_i\\+\sum_{\substack{{i,j=1}\\{:i< j}}}^nJ_{ij}\left(t\right)\mathcal E_{ij}\left(0\right).
\end{multline}
On the one hand, if the qubits are inductively coupled, then the interaction will be longitudinal---\textit{i.e.}, ${\mathcal E_{ij}\left(0\right)\in\operatorname{span}_{\mathbb R}\mathcal Z}$ \cite{10.1063/1.5089550}. On the other hand, if the qubits are capacitively coupled, the interaction will be transversal---\textit{i.e.}, $\mathcal E_{ij}\left(0\right)\in\operatorname{span}_{\mathbb R}\mathcal C_{+}\cup\mathcal C_{-}$ \cite{10.1063/1.5089550}.

Similar to semiconductor spin qubits \cite{RevModPhys.95.025003}, it is common to work in the rotating frame of the qubits \cite{10.1063/1.5089550},
\begin{equation}
    \hat K_{\textrm{tc}}\coloneqq\exp(-i\frac{1}{2}\sum_{i=1}^n\omega_iZ_it),
\end{equation}
yielding the Hamiltonian
\begin{multline}
    \hat{H}_{\textrm{tc}}\left(t\right)=\sum_{i=1}^n\Omega_i\left(t\right)\left[\cos(\omega_it)Y_i+\sin(\omega_it)X_i\right]\\+\sum_{\substack{{i,j=1}\\{:i< j}}}^nJ_{ij}\left(t\right)\mathcal E_{ij}\left(t\right),
\end{multline}
where $\phi_i\left(t\right)=\omega_it$. As with semiconductor spin qubits, $\Omega_i\left(t\right)$ does not provide sufficient degrees of freedom to implement virtual $Z$ pulses, as we need a degree of freedom for each $X_i$ and $Y_i$. As before, we can use the rotating wave approximation to implement approximate virtual $Z$ pulses. To achieve this, we consider either a local or a global single qubit drive with coupling strengths $\mu_i$:
\begin{equation}\label{eq: superconducting drive}
    \Omega_i\!\left(t\right)\!\coloneqq\!\begin{cases}
        \!\mu_i\left[I_i\!\left(t\right)\cos(\omega_it)+Q_i\!\left(t\right)\sin(\omega_it)\right]&\textrm{local,}\\
        \displaystyle\!\mu_i\sum_{j=1}^n\left[I_j\!\left(t\right)\cos(\omega_jt)+Q_j\!\left(t\right)\sin(\omega_jt)\right]&\textrm{global,}
    \end{cases}
\end{equation}
for all $i\in\qubits$. Taking the rotating wave approximation, we obtain the approximate Hamiltonian \cite{10.1063/1.5089550}
\begin{multline}
    \hat{H}_{\textrm{tc}}\left(t\right)\approx\frac{1}{2}\sum_{i=1}^n\mu_i\left[I_i\left(t\right)Y_i+Q_i\left(t\right)X_i\right]\\+\sum_{\substack{{i,j=1}\\{:i< j}}}^nJ_{ij}\left(t\right)\tilde{\mathcal E}_{ij}\left(t\right),
\end{multline}
where
\begin{equation}
    \tilde{\mathcal E}_{ij}\left(t\right)\coloneqq e^{i\sum_i\phi_i\left(t\right)Z_i}\tilde{\mathcal E}_{ij}\left(0\right)e^{-i\sum_i\phi_i\left(t\right)Z_i},
\end{equation}
and $\tilde{\mathcal E}_{ij}\left(0\right)$ is the projection of $\mathcal E_{ij}\left(0\right)$ onto the $\operatorname{span}_{\mathbb R}\mathcal Z\cup\mathcal C_{-}$ subspace---\textit{i.e.}, $\tilde{\mathcal E}_{ij}\left(0\right)=\mathcal E_{ij}\left(0\right)$ for longitudinal coupling.

Thus, for longitudinal coupling, we can implement virtual $Z$ pulses using the controls
\begin{align}
    J_{ij}'\left(\tau\right)\!&=\!J_{ij}\left(\tau\right),\\
    \begin{bmatrix}
        I_k'\left(\tau\right)\\
        Q_k'\left(\tau\right)
    \end{bmatrix}\!&=\!\begin{bmatrix}
    \cos(\!\frac{V_k\left(\tau\right)}{\hbar}\!)&-\sin(\!\frac{V_k\left(\tau\right)}{\hbar}\!)\\
    \sin(\!\frac{V_k\left(\tau\right)}{\hbar}\!)&\phantom{-}\cos(\!\frac{V_k\left(\tau\right)}{\hbar}\!)
    \end{bmatrix}\!\!\!\begin{bmatrix}
        I_k\left(\tau\right)\\
        Q_k\left(\tau\right)
    \end{bmatrix},
\end{align}
where we did not need a time dilation, so $t=\tau$.

However, for transversal couplings, we must use
\begin{align}
    J_{ij}'\left(\tau\right)\!&=\!\frac{J_{ij}\circ f\left(\tau\right)}{\displaystyle 1+\frac{\Delta^-v_{ij}\circ f\left(\tau\right)}{\hbar\left(\omega_j-\omega_i\right)}},\\
    \begin{bmatrix}
        I_k'\left(\tau\right)\\
        Q_k'\left(\tau\right)
    \end{bmatrix}\!&=\!\frac{\begin{bmatrix}
    \cos(\!\frac{V_k\circ f\left(\tau\right)}{\hbar}\!)&-\sin(\!\frac{V_k\circ f\left(\tau\right)}{\hbar}\!)\\
    \sin(\!\frac{V_k\circ f\left(\tau\right)}{\hbar}\!)&\phantom{-}\cos(\!\frac{V_k\circ f\left(\tau\right)}{\hbar}\!)
    \end{bmatrix}\!\!\!\begin{bmatrix}
        I_k\circ f\left(\tau\right)\\
        Q_k\circ f\left(\tau\right)
    \end{bmatrix}}{\displaystyle 1+\dfrac{\Delta^-v_{ij}\circ f\left(\tau\right)}{\hbar\left(\omega_j-\omega_i\right)}},
\end{align}
where $f\left(\tau\right)$ is the solution to
\begin{equation}
    f\left(\tau\right)+\frac{\Delta^-V_{ij}\circ f\left(\tau\right)}{\hbar\left(\omega_j-\omega_i\right)}=\tau.
\end{equation}

\subsubsection{Flux-tunable superconducting qubits}
\label{sec: tunable flux}

For flux-tunable superconducting qubits, the qubit frequencies depend on the flux applied to the qubit \cite{10.1063/1.5089550}. Additionally, in the absence of tunable couplers, the coupling strength $J_{ij}=J_{ij}\left(\Phi_i\left(t\right),\Phi_j\left(t\right)\right)$ becomes a function of time only through the fluxes $\Phi_i\left(t\right)$ and $\Phi_j\left(t\right)$ applied to qubits $i$ and $j$ \cite{10.1063/1.5089550}. Thus, the Hamiltonian in the laboratory frame becomes \cite{10.1063/1.5089550}
\begin{multline}
        H_{\textrm{ft}}\left(t\right)\coloneqq\frac{\hbar}{2}\sum_{i=1}^n\tilde\omega_i\circ\Phi_i\left(t\right)Z_i+\sum_{i=1}^n\Omega_i\left(t\right)Y_i\\+\sum_{\substack{{i,j=1}\\{:i< j}}}^nJ_{ij}\left(\Phi_i\left(t\right),\Phi_j\left(t\right)\right)\mathcal E_{ij}\left(0\right)
\end{multline}
As before, our couplings can be longitudinal or transversal \cite{10.1063/1.5089550}. We can, of course, implement a $Z$ pulse by varying the flux. However, we may have a constraint on the amount by which the qubit frequency can be varied. Further, we may wish to decouple our $Z$ pulse from our two-qubit coupling strength. This can be solved by using virtual $Z$ pulses.

\textit{Longitudinal coupling:---}As $\mathcal E_{ij}(0)\in\operatorname{span}_{\mathbb R}\mathcal Z$ \cite{10.1063/1.5089550}, we do not require any time dilation to implement virtual $Z$ pulses. Thus, we are free to choose any rotating frame of the form of \cref{eq: K def} to derive the new pulses. To make progress, we generalise our definition of $\Omega_i(t)$ to
\begin{equation}\label{eq: tilde superconducting drive}
    \tilde\Omega_i\!\left(t\right)\!\coloneqq\!\begin{cases}
        \!\!\mu_i\left[I_i\!\left(t\right)\cos\phi_i(t)\!+\!Q_i\!\left(t\right)\sin\phi_i(t)\right]&\textrm{local,}\\
        \displaystyle\!\!\mu_i\sum_{j=1}^n\left[I_j\!\left(t\right)\cos\phi_j(t)\!+\!Q_j\!\left(t\right)\sin\phi_j(t)\right]&\textrm{global,}
    \end{cases}
\end{equation}
for all $i\in\qubits$. Substituting for and taking the rotating wave approximation, we find
\vbox{\begin{multline}\label{eq: ft Hamiltonian}
    \hat{H}_{\textrm{ft}}\approx\frac{\hbar}{2}\sum_{i=1}^n\left[\tilde\omega_i\circ\Phi_i\left(t\right)+\phi_i(t)\right]Z_i\\+\frac{1}{2}\sum_{i=1}^n\mu_i\left[I_i\left(t\right)X_i+Q_i\left(t\right)Y_i\right]\\+\sum_{\substack{{i,j=1}\\{:i< j}}}^nJ_{ij}\left(\Phi_i\left(t\right),\Phi_j\left(t\right)\right)\mathcal E_{ij}\left(t\right).
\end{multline}}
Note that the frequencies dropped in the rotating wave approximation will depend on the function $\phi_i(t)$ chosen. Nonetheless, we can implement virtual $Z$ pulses with the controls
\begin{align}
    \begin{bmatrix}
        I_k'\left(\tau\right)\\
        Q_k'\left(\tau\right)
    \end{bmatrix}\!&=\!\begin{bmatrix}
    \cos(\!\frac{V_k\left(\tau\right)}{\hbar}\!)&-\sin(\!\frac{V_k\left(\tau\right)}{\hbar}\!)\\
    \sin(\!\frac{V_k\left(\tau\right)}{\hbar}\!)&\phantom{-}\cos(\!\frac{V_k\left(\tau\right)}{\hbar}\!)
    \end{bmatrix}\!\!\!\begin{bmatrix}
        I_k\left(\tau\right)\\
        Q_k\left(\tau\right)
    \end{bmatrix},\\
    \Phi_i'\left(\tau\right)&=\Phi_i\left(\tau\right).
\end{align}

\textit{Transversal coupling:---}Now $\mathcal E_{ij}(0)\not\in\operatorname{span}_{\mathbb R}\mathcal Z$ \cite{10.1063/1.5089550}, so we will need to employ a time dilation. A dilation will require us to rescale $J_{ij}$ and so update the fluxes $\Phi_k$. However, updating the fluxes will change the qubit frequencies. Until now, we have worked in the rotating frame of the qubit frequencies to remove the need to rescale the qubit frequencies---see \cref{sec: drift}. Thus, we cannot work within the same rotating frame. However, this presents a good opportunity to explore an interesting property of virtual $Z$ pulses identified in \cref{sec: rotating frame}: the solution depends on the rotating frame within which it is solved.

To obtain enough degrees of freedom to implement virtual $Z$ pulses, we also solve for the rotating frame, that admits a solution. That is, we need to solve the following system of equations:
\begin{subnumcases}{\label{eq: ft constraints}}
    \text{\cref{eq: f equations}},\\
    \begin{aligned}
        &\tilde\omega_k\circ\Phi_k'(\tau)-\dot\phi_k(\tau)\\&=\!\dv{f}{\tau}\!\!\left[\tilde\omega_k\!\circ\!\Phi_k\!\circ\! f(\tau)\!-\!\dot\phi_k\circ f(\tau)\right]\forall k\in\left\{i, j\right\},
    \end{aligned}\\
    J_{ij}\left(\Phi_i'\!\left(\tau\right),\Phi_j'\!\left(\tau\right)\right)\!=\!\dv{f}{\tau}J_{ij}\left(\Phi_i\!\circ\! f\!\left(\tau\right),\Phi_j\!\circ\! f\!\left(\tau\right)\right),
\end{subnumcases}
for $f(\tau)$, $\phi_i(\tau)$, $\phi_j(\tau)$, $\Phi_i(\tau)$, and $\Phi_j(\tau)$. We have between three and five constraints, depending on the case in \cref{eq: f equations}, and five variables. Thus, the variables may no longer be overconstrained.

\begin{figure}
    \includegraphics{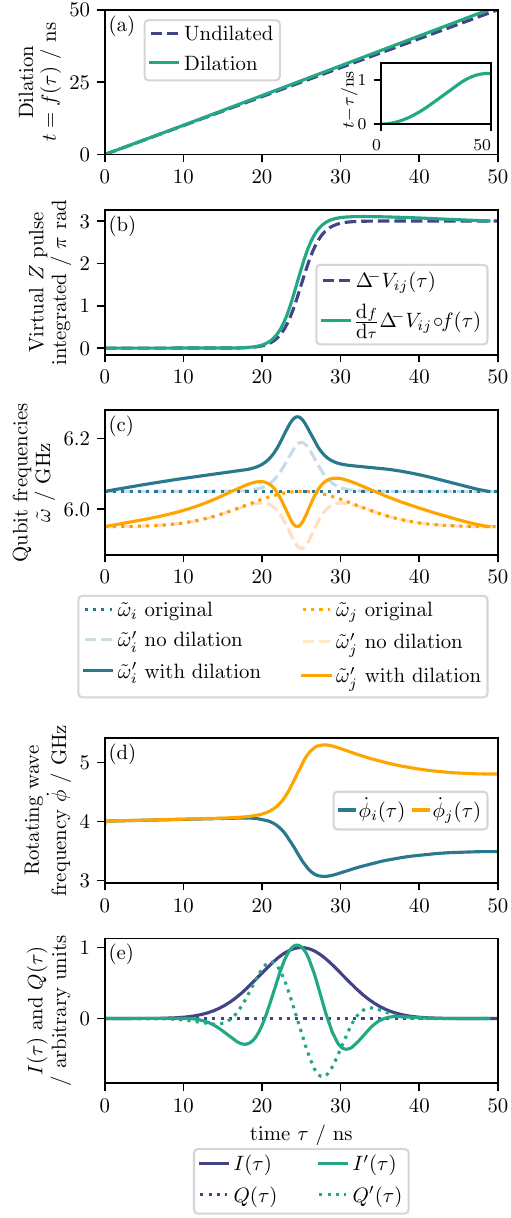}
    \vspace{-2.1em}
    \caption{\textbf{An example virtual $Z$ pulse on flux-tunable superconducting qubits.} \textbf{(a)} The optimised time dilation (solid) corresponding to the virtual $Z$ pulse in \textbf{(b)}. \textbf{(c)} The time dependence of the original qubit frequencies (dotted curve) and the new qubit frequencies with(or without) time dilation [dashed (or solid)]. \textbf{(d)} The frequency of the rotating frame used for each qubit. \textbf{(e)} The distorted $I'_i(\tau)$ and $Q'_i(\tau)$ pulses corresponding to the original $I_i(\tau)$ and $Q_i(\tau)$ pulses.}
    \label{fig: ft sc example}
\end{figure}

As an example, we take $f(\tau)$ to be a free function and solve for the other four functions as outlined in \cref{app: flux solution}. We optimise $f(\tau)$ to minimise the pulse distortion required to implement the virtual $Z$ pulse---see \cref{app: dilation optimisation} for details. We solve \cref{eq: ft constraints} numerically with $\mathcal E_{ij}(0)\in\operatorname{span}_{\mathbb R}\mathcal Z\cup \mathcal C_{-}$, $J_{ij}\propto\frac{1}{\tilde\omega_i-\omega_r}+\frac{1}{\tilde\omega_j-\omega_r}$ (as for capacitive couplings mediated through a bus resonator \cite{10.1063/1.5089550}) with frequency $\omega_r=8$ GHz, $T=50$ ns, $\Delta^{-}v_{ij}(t)=0.3\sech^2(\frac{t-25}{2.5})$ GHz, and the qubit frequencies $\tilde\omega_i\circ\Phi_i(t)=6.05$ GHz and ${\tilde\omega_j\circ\Phi_j(t)=5.95+0.05\exp[-\left(\frac{t-25}{10}\right)^2]}$. The solutions are presented in \cref{fig: ft sc example}(a--d). In \cref{fig: ft sc example}(c), we see that without the time dilation, the qubits would need to decrease in frequency. Typically, the flux sweet spot is a minimum in the qubit frequency \cite{10.1063/1.5089550}, making this infeasible. Fortunately, an optimised time dilation prevents the frequencies dropping below this minimum value---the frequency at $t=0$.

After solving \cref{eq: ft constraints}, we can utilise the values of $\phi_k(t)$ found to decompose the single-qubit drives as in \cref{eq: tilde superconducting drive} using \cref{app: solution}. Taking the rotating wave approximation, we can find \cref{eq: ft Hamiltonian}. With this approximate Hamiltonian, we can update the single-qubit drives:
\begin{equation}
    \begin{bmatrix}
        I_k'\!\left(\tau\right)\\
        Q_k'\!\left(\tau\right)
    \end{bmatrix}\!=\!\dv{f}{\tau}\!\!\begin{bmatrix}
        \cos(\!\frac{V_k\circ f\left(\tau\right)}{\hbar}\!)&-\sin(\!\frac{V_k\circ f\left(\tau\right)}{\hbar}\!)\\
        \sin(\!\frac{V_k\circ f\left(\tau\right)}{\hbar}\!)&\phantom{-}\cos(\!\frac{V_k\circ f\left(\tau\right)}{\hbar}\!)
        \end{bmatrix}\!\!\!\begin{bmatrix}
            I_k\!\circ\! f\!\left(\tau\right)\\
            Q_k\!\circ\! f\!\left(\tau\right)
        \end{bmatrix}.
\end{equation}
However, $\tilde\Omega_k'\left(\tau\right)$ still has time-dependent carrier frequencies. We can use the formulae in \cref{app: solution} again to convert back to a constant carrier frequency implementation. An example is presented in \cref{fig: ft sc example}(e).

Finally, we quantify the error introduced by the rotating wave approximation. Taking both the coupling strength and $I(\tau)$ pulse amplitude as $10$ MHz \cite{10.1063/1.5089550}, we numerically find that the distorted pulse in \cref{fig: ft sc example} implements the desired evolution with an infidelity of $3.3\times10^{-6}$.

\subsubsection{Cross-resonance superconducting qubits}
\label{sec: cr}

For cross-resonance superconducting qubits \cite{10.1063/1.5089550}, neither the fluxes nor the couplings are tunable. However, by utilising the Schrieffer-Wolff transformation \cite{PhysRevA.102.042605} to decouple the states in the coupling resonators from the computational space to leading order, we obtain the following Hamiltonian \cite{10.1063/1.5089550,PhysRevA.102.042605}:
\begin{equation}
H_{\textrm{cr}}\left(t\right)=-\frac{\hbar}{2}\sum_{i=1}^n\omega_iZ_i+\sum_{i,j=1}^n\Omega_i\left(t\right)\left[\nu_{ij}X_j+\mu_{ij}Z_iX_j\right],
\end{equation}
where $\nu_{ii}=1$ and $\mu_{ii}=0$.
Moving to the rotating frame of the qubits,
\begin{equation}
    \hat K_{\textrm{cr}}\coloneqq\exp(i\frac{1}{2}\sum_{i=1}^n\omega_iZ_it),
\end{equation}
the Hamiltonian becomes
\begin{multline}
    \hat{H}_{\textrm{cr}}\left(t\right)=\sum_{i,j=1}^n\Omega_i\left(t\right)[\nu_{ij}\cos(\omega_jt)X_j-\nu_{ij}\sin(\omega_jt)Y_j\\+\mu_{ij}\cos(\omega_jt)Z_iX_j-\mu_{ij}\sin(\omega_jt)Z_iY_j].
\end{multline}
To yield sufficient degrees of freedom to implement virtual $Z$ pulses, we assume single-qubit drives take the following local form
\begin{equation}
    \Omega_i\left(t\right)=\sum_{j=1}^nI_{ij}\left(t\right)\cos(\omega_jt)+Q_{ij}\left(t\right)\sin(\omega_jt).
\end{equation}
Substituting for $\Omega_i\left(t\right)$ and dropping oscillating terms in the rotating wave approximation, we find the approximate Hamiltonian \cite{PhysRevA.102.042605}
\begin{samepage}
\begin{multline}
    \hat{H}_{\textrm{cr}}\left(t\right)\approx\frac{1}{2}\sum_{i=1}^n[\nu_{ij}I_{ij}\left(t\right)X_j-\nu_{ij}Q_{ij}\left(t\right)Y_j\\+\mu_{ij}I_{ij}\left(t\right)Z_iX_j-\mu_{ij}Q_{ij}\left(t\right)Z_iY_j].
\end{multline}
\end{samepage}
Thus, to implement virtual $Z$ pulses approximately, we use
\begin{equation}
    \begin{bmatrix}
        I_{ij}'\left(\tau\right)\\
        Q_{ij}'\left(\tau\right)
    \end{bmatrix}\!=\!\begin{bmatrix}
    \phantom{-}\cos(\!\frac{V_j\left(\tau\right)}{\hbar}\!)&\sin(\!\frac{V_j\left(\tau\right)}{\hbar}\!)\\
    -\sin(\!\frac{V_j\left(\tau\right)}{\hbar}\!)&\cos(\!\frac{V_j\left(\tau\right)}{\hbar}\!)
    \end{bmatrix}\!\!\!\begin{bmatrix}
        I_{ij}\left(\tau\right)\\
        Q_{ij}\left(\tau\right)
    \end{bmatrix},
\end{equation}
where we did not need a time dilation, so $t=\tau$.

\section{Potential Applications}

We have considered how to implement virtual $Z$ pulses on several platforms. Now we turn to the potential applications of virtual $Z$ pulses. We identify three applications: gate design (\cref{sec: gate design}), variational quantum algorithms (\cref{sec: VQAs}), and Hamiltonian simulation (\cref{sec: Hamiltonian simulation}).

\subsection{Wider hardware support for virtual \texorpdfstring{$Z$}{Z} gates}
\label{sec: gate design}

An existing method for implementing virtual $Z$ gates is using frame tracking \cite{PRXQuantum.5.020338,PhysRevResearch.6.013235}. Frame tracking allows virtual $Z$ gates to be pulled through any multi-qubit gate that normalises the Lie group generated by weight-one Pauli-$Z$ strings, $L\coloneqq\exp(i\operatorname{span}_{\mathbb R}\left\{Z_i\right\}_{i=1}^n)$. These multi-qubit gates form the normaliser group
\begin{equation}
    \mathcal N\coloneqq\left\{u\in\operatorname{U}(2^n):uL=Lu\right\},
\end{equation}
which is generated by $X_1$, $\left\{\text{SWAP}_{i,i+1}\right\}_{i=1}^{n-1}$, and arbitrary phase gates, $\exp(i\operatorname{span}_{\mathbb R}\left\{I,Z\right\}^{\otimes n})$---see \cref{app: normalizer} for a proof. This group includes some notable two-qubit gates which are compatible with virtual $Z$ gates: SWAP, $i$SWAP, $b$SWAP, and CPHASE. However, CNOT and general powers of SWAP, $i$SWAP, and $b$SWAP are not in $\mathcal N$. Thus, many parameterised two-qubit gates do not support the use of virtual $Z$ gates. The two existing solutions are utilising microwave-activated multi-qubit gates \cite{PhysRevLett.129.060501,PRXQuantum.6.010349,PhysRevApplied.20.024011,Wu2024,Kim2022,PhysRevApplied.14.014072,PhysRevA.98.052318,PhysRevResearch.2.033447,PRXQuantum.5.020326}, or implementing physical $Z$ gates either via idling or a combination of $X$ and $Y$ rotations. This work presents a new alternative: By employing virtual $Z$ pulses, we can pull virtual $Z$ gates through pulses that generate CNOT and arbitrary powers of SWAP, $i$SWAP, and $b$SWAP. In fact, we can pull virtual $Z$ pulses through any two-qubit gate generated by one of the Hamiltonian algebras $\operatorname{span}_{\mathbb R}\mathcal Z\cup\mathcal C_{\pm}$ or $\operatorname{span}_{\mathbb R}\mathcal Z\cup\mathcal Q_k$ supplemented with simultaneous driving of $X_i$ and $Y_i$ via $I_i(t)$ and $Q_i(t)$---\textit{i.e.}, any $\operatorname{U}(2)$ gate, as all four Hamiltonian algebras yield the full $\mathfrak u(2)$ dynamic Lie algebra. As the virtual $Z$ gate is not applied simultaneously with the two-qubit gate the pulse distortion is similar to applying physical $Z$ gates. The distortion corresponds to an ideling delay before applying the two-qubit gate. However, the delay is not chosen to implement the whole $Z$ rotation, but to accumulate the phase that will not commute through the two-qubit gate.

\subsection{Variational quantum algorithms}
\label{sec: VQAs}

Variational quantum algorithms (VQAs) \cite{Cerezo2021,TILLY20221,BLEKOS20241} are a promising contender for demonstrating quantum advantage on noisy intermediate-scale quantum devices. VQAs evolve an initial state on a quantum computer under an ansatz specified by a parameterised unitary. Next, VQAs measure a cost function of the output state. A classical optimiser then minimises the cost function.

In general, virtual $Z$ pulses could be employed to increase the expressiveness of pulse-based ans\"atze \cite{doi:10.1126/science.abo6587,PRXQuantum.2.010101,PhysRevApplied.23.024036,Meitei2021,PhysRevResearch.5.033159,9996174,10.3389/frqst.2023.1273581,PhysRevD.111.034506,PhysRevX.7.021027,PhysRevApplied.19.064071,PhysRevLett.118.150503,Lu2017,Long2025} and potentially decrease the evolution time \cite{PhysRevApplied.19.064071,Long2025}. As the optimiser already optimises the pulse shapes, we can absorb the pulse distortions into the optimiser: We only need to prepend and append parameterised $Z$ rotations to the ansatz.

Due to the constraints outlined above, not all virtual $Z$ pulses can be pulled through the remaining pulse sequence. However, we can try the reverse process to pull the virtual $Z$ pulses through to the beginning of the pulse. This, of course, also may not work. Nonetheless, this justifies both appending and prepending parameterised $Z$ rotations to the ansatz.

Unfortunately, two common cases yield either the appended or prepended parameterised $Z$ rotations redundant: (i) If the measurements are in the computational basis (\textit{e.g.}, QAOA \cite{farhi2014quantumapproximateoptimizationalgorithm,BLEKOS20241,PhysRevX.7.021027} for problems such as MaxCut \cite{farhi2014quantumapproximateoptimizationalgorithm,PhysRevResearch.4.033029,PhysRevA.109.032420} or financial modeling \cite{hodson2019portfoliorebalancingexperimentsusing,Brandhofer2022,chen2024quasibinaryencodingbasedquantum,yuan2024quantifyingadvantagesapplyingquantum}) the virtual $Z$ pulses will not increase the expressiveness as they will not affect the measurement outcome. (ii) Similarly, if the input state is a computational basis state (\textit{e.g.}, the Hartree-Fock state within the Jordan-Wigner encoding for molecular simulations \cite{RevModPhys.92.015003}), then virtual $Z$ pulses will not increase the expressiveness.

In many cases, the prepended and appended parameterised $Z$ rotations can be virtualised. The appended $Z$ rotations can be absorbed into the measurements by updating the measurement basis. The prepended $Z$ rotations can be absorbed into the initial state preparation. For example, for QAOA, it is common to take the initial state as $\ket{+}^{\otimes n}$. By absorbing the parameterised $Z$ rotation, we take the initial state to be $\left[\cos(\theta)\ket{+}+i\sin(\theta)\ket{-}\right]^{\otimes n}$. The original initial state can be prepared with single-qubit Hadamard gates; the new initial state can be prepared with the same pulse sequence but a shifted microwave phase.

\subsection{Hamiltonian simulation}
\label{sec: Hamiltonian simulation}

Hamiltonian simulation \cite{PRXQuantum.2.017003} is both the seminal \cite{Feynman1982} and one of the most promising use cases for quantum computers \cite{PRXQuantum.2.017003}. Virtual $Z$ pulses increase the number of Hamiltonians we can perform analogue simulations of. As an example, we will consider semiconductor spin qubits \cite{RevModPhys.95.025003}. In \cref{sec: semiconductor}, we found that in the rotating frame of the qubits,
\begin{equation}
    \hat K_{\ce{Si}}(t)\coloneqq\exp(i\frac{1}{2}\sum_{i=1}^n\omega_iZ_it),
\end{equation}
we could implement the effective Hamiltonian
\begin{multline}
    \hat{H}_{\ce{Si}\textrm{ eff}}\left(t\right)\coloneqq-\frac{1}{2}\sum_{i=1}^n\left[I_i\left(t\right)X_i+Q_i\left(t\right)Y_i+v_i(t)Z_i\right]\\+\frac{1}{4}\sum_{\substack{{i,j=1}\\{:i<j}}}^nJ_{ij}\left(t\right)\mathcal E_{ij}\left(t\right),
\end{multline}
providing that, at each $t\in\left[T\right]$ and for each $i$, there was at most one value of $j$ for which $J_{ij}(t)\ne 0$. Moving to the rotating frame with the average Zeeman splitting,
\begin{equation}
    \hat K_{\ce{Si}}^{\textrm{avg}}(t)\coloneqq\exp(i\frac{1}{2n}\sum_{i,j=1}^n\omega_iZ_jt),
\end{equation}
the Hamiltonian becomes
\begin{multline}
    \hat{\hat{H}}_{\ce{Si}\textrm{ eff}}\left(t\right)=\\-\frac{1}{2}\sum_{i=1}^n[\left(I_i\left(t\right)\cos(\Delta\omega_i t)+Q_i\left(t\right)\sin(\Delta\omega_i t)\right)X_i\\+\left(Q_i\left(t\right)\cos(\Delta\omega_i t)-I_i\left(t\right)\sin(\Delta\omega_i t)\right)Y_i+v_i(t)Z_i]\\+\frac{1}{4}\sum_{\substack{{i,j=1}\\{:i<j}}}^nJ_{ij}\left(t\right)\left[X_iX_j+Y_iY_j+Z_iZ_j\right].
\end{multline}
Next, by choosing
\begin{equation}
    \begin{bmatrix}
        I_i\left(t\right)\\
        Q_i\left(t\right)\\
        v_i\left(t\right)
    \end{bmatrix}
    =
    \begin{bmatrix}
        \cos(\Delta\omega_it)&-\sin(\Delta\omega_it)&0\\
        \sin(\Delta\omega_it)&\phantom{-}\cos(\Delta\omega_it)&0\\
        0&0&1
    \end{bmatrix}
    \vec B_i(t),
\end{equation}

\vbox{
\noindent we can simulate a Heisenberg Hamiltonian with arbitrary magnetic fields:
\begin{multline}
    \hat{\hat{H}}_{\ce{Si}\textrm{ eff}}\left(t\right)=-\frac{1}{2}\sum_{i=1}^n\vec B_i(t)\cdot\vec\sigma_i\\+\frac{1}{4}\sum_{\substack{{i,j=1}\\{:i<j}}}^nJ_{ij}\left(t\right)\left[X_iX_j+Y_iY_j+Z_iZ_j\right].
\end{multline}}
Note that without the virtual $Z$ pulses, we would only be able to natively simulate a Heisenberg Hamiltonian
\linebreak\vspace{0.3em}
with the same Zeeman splittings as the qubits up to a
\linebreak\vspace{0.3em}
global field: $B_i^{(z)}(t)=\hbar\omega_i+B_{\textrm{global}}^{(z)}$. One use of this is preparing the ground states of Heisenberg Hamiltonians with adiabatic transitions \cite{farhi2000quantumcomputationadiabaticevolution}.

\section{Discussion and conclusion}
\label{sec: conclusion}

Within this article, we have presented a theoretical framework for implementing virtual $Z$ pulses. This framework is architecture-independent, and we demonstrate example pulse shapes for semiconductor spin qubits and superconducting qubits. In contrast to virtual $Z$ gates, we find that virtual $Z$ pulses can only be implemented approximately on most platforms. Thus, much like the theory of Rabi-oscillations, which also employs the rotating wave approximation, this framework should act as the first step in designing a pulse sequence before applying pulse-shaping methods \cite{Glaser2015,Mahesh2023,Müller_2022} to correct the errors introduced by the approximations. Alternatively, if virtual $Z$ pulses are utilised as degrees of freedom within a variational quantum algorithm, then the classical optimiser can compensate for the approximation errors.

We find that virtual $Z$ pulses can be applied to more hardware platforms than virtual $Z$ gates. Virtual $Z$ gates require the two-qubit gates to either be microwave-activated or to normalise the single-qubit $Z$ rotations. In contrast, virtual $Z$ pulses have a weaker constraint on the Hamiltonian algebra. This allows more freedom in architecture and gate design. We demonstrate that this increased freedom allows for more Hamiltonians to be natively simulated in an analogue manner.

Further, we found that on many platforms, virtual $Z$ pulses can only be implemented if each qubit is coupled to at most one other qubit simultaneously. The trivial exception is qubits coupled by a $ZZ$ interaction. However, the more interesting exception is cross-resonance superconducting qubits.

Much like their gate-based counterparts, we anticipate that virtual $Z$ pulses will allow for faster and more robust quantum information processing. Additionally, virtual $Z$ pulses open a new zoo of native evolutions with increased flexibility and tunability.

\section*{Data availability}
The data collected for all the figures in the Article are available in Ref. \cite{long_2025_17113741}.

\section*{Code availability}
All numerical calculations in this paper were performed in Python using NumPy \cite{harris2020array}, SciPy \cite{2020SciPy-NMeth}, and PySTE \cite{Long_PySTE_2025,Long2025}. The data and figures can be reproduced using the source code available at \href{https://github.com/Christopher-K-Long/Virtual-Z-gates-at-the-pulse-level-source-code}{https://github.com/Christopher-K-Long/Virtual-Z-gates-at-the-pulse-level-source-code} \cite{long_2025_17123423}.

\section*{Acknowledgements}
C.K.L. would like to thank Kieran Dalton, Djamila Hiller, Normann Mertig, and Maximilian Rimbach-Russ for valuable feedback, discussions, and pointers to relevant literature.

\section*{Author contributions}
C.K.L. designed and executed the project, and wrote the software and manuscript. C.H.W.B. supervised the project.

\bibliography{reference}

\onecolumngrid
\appendix
\renewcommand{\thesubsection}{\Alph{section}.\arabic{subsection}}

\section{Solving for \texorpdfstring{$I$}{I} and \texorpdfstring{$Q$}{Q}}
\label{app: solution}

In this article, we frequently decompose the single-qubit drives into the form:
\begin{equation}
    \tilde\Omega_i\!\left(t\right)\!\coloneqq\!\begin{cases}
        \!\!\mu_i\left[I_i\!\left(t\right)\cos\phi_i(t)\!+\!Q_i\!\left(t\right)\sin\phi_i(t)\right]&\textrm{local,}\\
        \displaystyle\!\!\mu_i\sum_{j=1}^n\left[I_j\!\left(t\right)\cos\phi_j(t)\!+\!Q_j\!\left(t\right)\sin\phi_j(t)\right]&\textrm{global.}
    \end{cases}\tag{\ref{eq: tilde superconducting drive}}
\end{equation}
However, it may be the case that $I_i\left(t\right)$ and $Q_i\left(t\right)$ are unknown. Thus, it is important to be able to invert \cref{eq: tilde superconducting drive}. The simplest solution is
\begin{equation}
    \begin{bmatrix}
        I_i\left(t\right)\\
        Q_i\left(t\right)
    \end{bmatrix}=\frac{\tilde\Omega_i\left(t\right)}{N\mu_i}\begin{bmatrix}
        \cos\phi_j(t)\\
        \sin\phi_j(t)
    \end{bmatrix}\qquad\textrm{ where }N=\begin{cases}
        1&\textrm{local,}\\
        N&\textrm{global.}
    \end{cases}
\end{equation}
However, in the case $\phi_j(t)=\omega_it$ and $\Omega_i=\cos(\omega_it)$ this will yield
\begin{equation}
    \begin{bmatrix}
        I_i\left(t\right)\\
        Q_i\left(t\right)
    \end{bmatrix}=\frac{1}{2\mu_i}\begin{bmatrix}
        1+\cos(2\omega_it)\\
        \sin(2\omega_it)
    \end{bmatrix}.
\end{equation}
Ideally, $I_i(t)$ and $Q_i(t)$ should be as close to baseband pulses as possible. Thus, a more optimal solution would be $I_i(t)=1$ and $Q_i(t)=0$.

To obtain lower frequency solutions, we will solve for $I_i(t)$ and $Q_i(t)$ in frequency space. To gain some intuition, we will start with the local form of $\Omega_i$. First we will re-express $\tilde\Omega_i$ in terms of $\phi_i$: Let $\bar\Omega_i\coloneqq\frac{1}{\mu_i}\Omega_i\circ\phi_i^{-1}$, $\bar I_i\coloneqq I_i\circ\phi_i^{-1}$, and $\bar Q_i\coloneqq Q_i\circ\phi_i^{-1}$. Thus,
\begin{equation}
    \bar\Omega_i(t)=\bar I_i(t)\cos(t)+\bar Q_i(t)\sin(t),
\end{equation}
and taking the Fourier transform (assuming $\phi_i(t)\to\pm\infty$ as $t\to\pm\infty$ or $\Omega_i(t)$ vanishes outside a set of finite measure), we find
\begin{equation}
    2\hat{\bar\Omega}_i(\omega)=\hat{\bar I}_i(\omega-1)-i\hat{\bar Q}_i(\omega-1)+\hat{\bar I}_i(\omega+1)+i\hat{\bar Q}_i(\omega+1),
\end{equation}
where $\hat\bullet$ denotes the Fourier transform of $\bullet$. As $\bar\Omega_i$, $\bar I_i$, and $\bar Q_i$ are real functions, then $\hat{\bar\Omega}_i$ and $\hat{\bar I}_i$ are Hermitian functions, and $-i\hat{\bar Q}$ is an anti-Hermitian function. Thus, as any function can be expressed as a linear combination of a Hermitian and an anti-Hermitian function, then $\hat g(\omega)=\hat{\bar I}_i(\omega-1)-i\hat{\bar Q}_i(\omega-1)$ can be any function. Further, note that
\begin{equation}
    \hat{\bar I}_i(\omega+1)+i\hat{\bar Q}_i(\omega+1)=\hat{\bar I}_i^*(-\omega-1)+i\hat{\bar Q}^*_i(-\omega-1)=\hat g^*(-\omega),
\end{equation}
and so $\hat{\bar\Omega}_i(\omega)$ is just the Hermitian component of $\hat g(\omega)$. Thus, we can express the solutions for $\hat g(\omega)$ as
\begin{equation}
    \hat g(\omega)=\hat g_*(\omega)+\hat a(\omega)\qquad\text{where }\hat g_*(\omega)\coloneqq2\hat{\bar\Omega}_i(\omega)\Theta(\omega),
\end{equation}
and $\Theta$ is the Heaviside step function and $\hat a$ is an arbitrary anti-Hermitian function. That is, we are free to choose $\hat a$ such that we minimise the spectral widths of $\bar I_i$ and $\bar Q_i$. We define the spectral width of a function as
\begin{equation}
    \operatorname{SW}(\hat h)\coloneqq\sup\{\left|\omega\right|:\hat h(\omega)\ne 0\}.
\end{equation}
As $\bar I_i$ is Hermitian and $i\bar Q_i$ is anti-Hermitian, then 
\begin{equation}
    w\coloneqq\max\left\{\operatorname{SW}(\hat{\bar I}_i), \operatorname{SW}(\hat{\bar Q}_i)\right\}=\operatorname{SW}[\hat g(\omega+1)]=\max\left\{\operatorname{SW}[\hat g_*(\omega+1)],\operatorname{SW}[\hat a(\omega+1)]\right\}
\end{equation}
As $\hat a$ is anti-Hermitian $\operatorname{SW}[\hat a(\omega+1)]=\operatorname{SW}[\hat a(\omega)]+1$. Further, $\operatorname{SW}[\hat g_*(\omega+1)]=\max\left\{1,\operatorname{SW}[\hat{\bar\Omega}_i]-1\right\}$. Therefore
\begin{equation}
    w=\max\left\{\operatorname{SW}[\hat a]+1,\operatorname{SW}[\hat{\bar\Omega}_i]-1\right\}
\end{equation}
Thus, any $\hat a$ such that $\operatorname{SW}[\hat a]\le\max\left\{0,\operatorname{SW}[\hat{\bar\Omega}_i]-2\right\}$ minimises $w$. Thus, $\hat a(\omega)=0$ is always an optimal value, and the only optimal solution for $\operatorname{SW}[\hat{\bar\Omega}_i]\le2$.

By taking the Hermitian and anti-Hermitian components of this solution and using $\hat{\bar\Omega}_i(\omega)=\hat{\bar\Omega}^*_i(-\omega)$, we can extract $\hat{\bar I}_i(\omega)$ and $\hat{\bar Q}_i(\omega)$ as
\begin{equation}
    \begin{bmatrix}
        \hat{\bar I}_i(\omega)\\
        \hat{\bar Q}_i(\omega)
    \end{bmatrix}=\begin{bmatrix}
    1&\phantom{-}1\\
    i&-i
    \end{bmatrix}\begin{bmatrix}
    \hat{\bar\Omega}_i(\omega+1)\Theta(1+\omega)\\
    \hat{\bar\Omega}_i(\omega-1)\Theta(1-\omega)
    \end{bmatrix}+\frac{1}{2}\begin{bmatrix}
        1&-1\\
        i&i
    \end{bmatrix}\begin{bmatrix}
        \hat a(\omega+1)\\
        \hat a(\omega-1)
    \end{bmatrix}.
\end{equation}
Consider $\omega\ge\max\left\{1, \operatorname{SW}(\hat a)-1\right\}$:
\begin{equation}
    \begin{bmatrix}
        \hat{\bar I}_i(\omega)\\
        \hat{\bar Q}_i(\omega)
    \end{bmatrix}=\begin{bmatrix}
    \hat{\bar\Omega}_i(\omega+1)-\frac{1}{2}\hat a(\omega-1)\\
    i\hat{\bar\Omega}_i(\omega-1)+\frac{1}{2}\hat a(\omega-1)
    \end{bmatrix}.
\end{equation}
Taking the $\ell^2$ norm we find:
\begin{equation}
    \left|\hat{\bar I}_i(\omega)\right|^2+\left|\hat{\bar Q}_i(\omega)\right|^2=2\left|\hat{\bar\Omega}_i(\omega+1)\right|^2+\frac{1}{2}\left|\hat a(\omega-1)\right|^2
\end{equation}
Thus, $\hat a(\omega-1)\ne0$ for $\omega\ge\max\left\{1, \operatorname{SW}(\hat a)-1\right\}$ will only increase the $\left|\hat{\bar I}_i(\omega)\right|^2+\left|\hat{\bar Q}_i(\omega)\right|^2$. Thus, we can minimise $\left|\hat{\bar I}_i(\omega)\right|^2+\left|\hat{\bar Q}_i(\omega)\right|^2$ for all $\omega\ge\max\left\{1, \operatorname{SW}(\hat a)-1\right\}$ by setting $\hat a(\omega-1)=0$ for all $\omega\ge\max\left\{1, \operatorname{SW}(\hat a)-1\right\}$. However, if $\hat a(\omega)=0$ for all $\omega\ge\max\left\{0, \operatorname{SW}(\hat a)-2\right\}$, then $\operatorname{SW}(\hat a)=0$ and $\hat a(\omega)=0$. Thus, we will proceed with $\hat a(\omega)=0$ as $\hat a(\omega)=0$ minimises $w$ and because while any other $\hat a$ may decrease $\left|\hat{\bar I}_i(\omega)\right|^2+\left|\hat{\bar Q}_i(\omega)\right|^2$ for some specific value of $\omega$ it will do this at the price of increasing $\left|\hat{\bar I}_i(\left|\omega\right|+2)\right|^2+\left|\hat{\bar Q}_i(\left|\omega\right|+2)\right|^2$. Thus, our solution for $\hat{\bar I}_i$ and $\hat{\bar Q}_i$ in Fourier space is
\begin{equation}
    \begin{bmatrix}
        \hat{\bar I}_i(\omega)\\
        \hat{\bar Q}_i(\omega)
    \end{bmatrix}=\begin{bmatrix}
    1&\phantom{-}1\\
    i&-i
    \end{bmatrix}\begin{bmatrix}
    \hat{\bar\Omega}_i(\omega+1)\Theta(1+\omega)\\
    \hat{\bar\Omega}_i(\omega-1)\Theta(1-\omega)
    \end{bmatrix}.
\end{equation}
Fourier transforming back, we find
\begin{align}
    \begin{bmatrix}
        I_i\left(t\right)\\
        Q_i\left(t\right)
    \end{bmatrix}&=\frac{1}{\mu_i}\begin{bmatrix}
        \cos\phi_i(t)&-\sin\phi_i(t)\\
        \sin\phi_i(t)&\phantom{-}\cos\phi_i(t)
    \end{bmatrix}\begin{bmatrix}
        \Omega_i\left(t\right)\\
        \displaystyle\frac{1}{\pi}\operatorname{p.v.}\int\limits_{-\infty}^\infty\dd{x}\frac{\Omega_i\circ\phi_i^{-1}(x)}{x-\phi_i(t)}
    \end{bmatrix}\\
    &=\frac{1}{\mu_i}\begin{bmatrix}
        \cos\phi_i(t)&-\sin\phi_i(t)\\
        \sin\phi_i(t)&\phantom{-}\cos\phi_i(t)
    \end{bmatrix}\begin{bmatrix}
        \Omega_i\left(t\right)\\
        \displaystyle\frac{1}{\pi}\operatorname{p.v.}\int\limits_{-\infty}^\infty\dd{x}\dot\phi_i(x)\frac{\Omega_i(x)}{\phi_i(x)-\phi_i(t)}
    \end{bmatrix},
\end{align}
where $\operatorname{p.v.}$ is the Cauchy principal value.

For the global case, we can follow a similar procedure. To make progress, we will need to restrict $\phi_j(t)=\omega_j\phi(t)$ for all $j$. Taking Let $\bar\Omega\coloneqq\frac{1}{\mu_i}\Omega_i\circ\phi^{-1}$, $\bar I_i\coloneqq I_i\circ\phi^{-1}$, and $\bar Q_i\coloneqq Q_i\circ\phi^{-1}$ we find
\begin{equation}
    \bar\Omega(t)=\sum_j\left[\bar I_j(t)\cos(\omega_jt)+\bar Q_j(t)\sin(\omega_jt)\right],
\end{equation}
and taking the Fourier transform (assuming $\phi_i(t)\to\pm\infty$ as $t\to\pm\infty$ or $\Omega_i(t)$ vanishes outside a set of finite measure), we find
\begin{equation}
    2\hat{\bar\Omega}(\omega)=\sum_j\left[\hat{\bar I}_j(\omega-\omega_j)-i\hat{\bar Q}_j(\omega-\omega_j)+\hat{\bar I}_j(\omega+\omega_j)+i\hat{\bar Q}_j(\omega+\omega_j)\right].
\end{equation}
We can identify the functions $\hat g_j(\omega)=\hat{\bar I}_j(\omega-\omega_j)-i\hat{\bar Q}_j(\omega-\omega_j)$ such that the only constraint on the $\left\{\hat g_j(\omega)\right\}$ is that the Hermitian component of $\sum_j\hat g_j(\omega)$ is $\hat{\bar\Omega}(\omega)$. As for the local case,
\begin{equation}
    w_j\coloneqq\max\left\{\operatorname{SW}(\hat{\bar I}_j), \operatorname{SW}(\hat{\bar Q}_j)\right\}=\operatorname{SW}[\hat g(\omega+\omega_j)].
\end{equation}
Thus, we can minimise $w_j$ by taking
\begin{equation}
    \hat g_j(\omega)=2\hat{\bar\Omega}(\omega)\sqcap_j(\omega)\qquad\text{where }\sqcap_j\left(\omega\right)=\begin{cases}
        \Theta(\omega)-\Theta(\omega-\frac{1}{2}[\omega_1+\omega_2])&j=1,\\
        \Theta(\omega-\frac{1}{2}[\omega_{j-1}+\omega_j])-\Theta(\omega-\frac{1}{2}[\omega_j+\omega_{j+1}])&2\le j\le n-1,\\
        \Theta(\omega-\frac{1}{2}[\omega_{n-1}+\omega_n])&j=n.
    \end{cases}
\end{equation}
As for the local case, by taking the Hermitian and anti-Hermitian components of this solution and using $\hat{\bar\Omega}(\omega)=\hat{\bar\Omega}^*(-\omega)$, we can extract $\hat{\bar I}_j(\omega)$ and $\hat{\bar Q}_j(\omega)$ as
\begin{equation}
    \begin{bmatrix}
        \hat{\bar I}_j(\omega)\\
        \hat{\bar Q}_j(\omega)
    \end{bmatrix}=\begin{bmatrix}
    1&\phantom{-}1\\
    i&-i
    \end{bmatrix}\begin{bmatrix}
    \hat{\bar\Omega}(\omega+\omega_j)\sqcap_j(\omega_j+\omega)\\
    \hat{\bar\Omega}(\omega-\omega_j)\sqcap_j(\omega_j-\omega)
    \end{bmatrix},
\end{equation}
and Fourier transforming back, we find
\begin{multline}
    \begin{bmatrix}
        I_j(t)\\
        Q_j(t)
    \end{bmatrix}=\frac{1}{2\pi\mu_i}\begin{bmatrix}
        \cos(\omega_jt)&-\sin(\omega_jt)\\
        \sin(\omega_j t)&\phantom{-}\cos(\omega_jt)
    \end{bmatrix}\int\limits^\infty_{-\infty}\dd{t'}\Omega_i(t')K(t-t')\\
    \text{where }K(t)\coloneqq\begin{cases}
        (\omega_1+\omega_2)\begin{bmatrix}
            \operatorname{sinc}(\frac{1}{2}[\omega_1+\omega_2]t)\\
            \operatorname{sinc}(\frac{1}{4}[\omega_1+\omega_2]t)\sin(\frac{1}{4}[\omega_1+\omega_2]t)
        \end{bmatrix}&j=1,\\
        (\omega_{j+1}-\omega_{j-1})\operatorname{sinc}(\frac{1}{4}[\omega_{j+1}-\omega_{j-1}]t)\begin{bmatrix}
            \cos(\frac{1}{4}[\omega_{j+1}+\omega_{j+1}+2\omega_j]t)\\
            \sin(\frac{1}{4}[\omega_{j+1}+\omega_{j+1}+2\omega_j]t)
        \end{bmatrix}&2\le j\le n-1,\\
        \begin{bmatrix}
        \phantom{-}\cos(\frac{1}{2}[\omega_{n-1}+\omega_n]t)&\sin(\frac{1}{2}[\omega_{n-1}+\omega_n]t)\\
        -\sin(\frac{1}{2}[\omega_{n-1}+\omega_n]t)&\cos(\frac{1}{2}[\omega_{n-1}+\omega_n]t)
        \end{bmatrix}\begin{bmatrix}
            \pi\delta(t)\\
            \operatorname{p.v.}\frac{1}{t}
        \end{bmatrix}&j=n.
    \end{cases}
\end{multline}

\section{Semiconducting spin qubit virtual \texorpdfstring{$Z$}{Z} pulse infidelities}
\label{app: infidelities}

Due to the rotating wave approximation, the infidelities of the virtual $Z$ pulses are limited when we include single-qubit drives. We numerically simulate the system's evolution using PySTE \cite{Long2025, Long_PySTE_2025} to estimate an order of magnitude of the infidelities. To obtain our device parameters, we rescale the pulses in \cref{fig: example} such that the qubit frequencies are $18$ and $18.03$ GHz, giving a pulse duration of $33.\dot3$ ns. We also rescale $J$ by $10$ MHz and $I$ and $Q$ by $2.6$ MHz. The infidelities of the implemented pulse to the desired evolutions are presented in \cref{fig: infidelity}. The infidelity follows a power law scaling in the virtual $Z$ pulse amplitude. The exponent is close to $2$.

\begin{figure*}
    \includegraphics{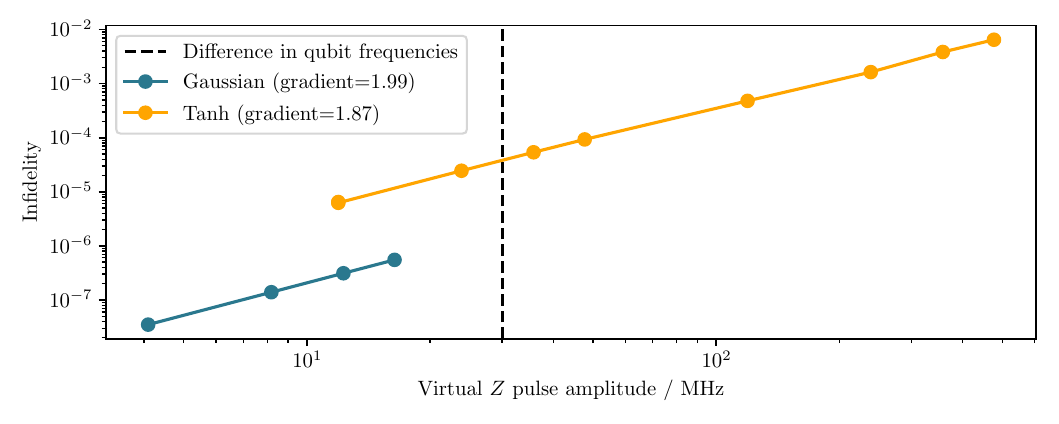}
    \caption{\textbf{Infidelity of pulse sequence implementing virtual $Z$ pulse to the desired evolution introduced by the rotating wave approximation.} The infidelity is plotted against the maximum magnitude of the virtual $Z$ pulse on a log-log scale for two different pulse shapes. We label the pulse shapes by the function they integrate to. Gaussian corresponds to the left-hand column of \cref{fig: example} and Tanh corresponds to the remaining two columns of \cref{fig: example}. As a reference value, the difference in qubit frequencies is displayed as a vertical dashed line. The gradients of the two lines on the log-log scale are 1.99 and 1.87, respectively.}
    \label{fig: infidelity}
\end{figure*}

\section{Solving for the rotating frame}
\label{app: flux solution}

In the body of the article, we encountered the following system of equations:
\begin{subnumcases}{}
    \text{\cref{eq: f equations}}\tag{\ref{eq: ft constraints}a},\\
    \begin{aligned}
        &\tilde\omega_k\circ\Phi_k'(\tau)-\dot\phi_k(\tau)\\&=\!\dv{f}{\tau}\!\!\left[\tilde\omega_k\!\circ\!\Phi_k\!\circ\! f(\tau)\!-\!\dot\phi_k\circ f(\tau)\right]\forall k\in\left\{i, j\right\},
    \end{aligned}\tag{\ref{eq: ft constraints}b}\\
    J_{ij}\left(\Phi_i'\!\left(\tau\right),\Phi_j'\!\left(\tau\right)\right)\!=\!\dv{f}{\tau}J_{ij}\left(\Phi_i\!\circ\! f\!\left(\tau\right),\Phi_j\!\circ\! f\!\left(\tau\right)\right),\tag{\ref{eq: ft constraints}c}
\end{subnumcases}
where \cref{eq: f equations} is:
\begin{subnumcases}{}
    \textrm{all $f\left(\tau\right)$ are solutions}&$\mathcal E_{ij}\left(0\right)\in\operatorname{span}_{\mathbb R}\mathcal Z$,\tag{\ref{eq: f equations}a}\\
    \Delta^{\pm}\phi_{ij}\left(\tau\right)=\Delta^{\pm}\phi_{ij}\circ f\left(\tau\right)+\tfrac{1}{\hbar}\Delta^{\pm}V_{ij}\circ f\left(\tau\right)\mod 4\pi&$\mathcal E_{ij}\left(0\right)\in\operatorname{span}_{\mathbb R}\mathcal Z\cup\mathcal C_{\pm}$,\tag{\ref{eq: f equations}b}\\
    \phi_k\left(\tau\right)=\phi_k\circ f\left(\tau\right)+\tfrac{1}{\hbar}V_k\circ f\left(\tau\right)\mod 4\pi&$\mathcal E_{ij}\left(0\right)\in\operatorname{span}_{\mathbb R}\mathcal Z\cup\mathcal Q_k$,\tag{\ref{eq: f equations}c}\\
    \begin{cases}
        \phi_i\left(\tau\right)=\phi_i\circ f\left(\tau\right)+\tfrac{1}{\hbar}V_i\circ f\left(\tau\right)\mod 4\pi\\
        \phi_j\left(\tau\right)=\phi_j\circ f\left(\tau\right)+\tfrac{1}{\hbar}V_j\circ f\left(\tau\right)\mod 4\pi
    \end{cases}&otherwise,\tag{\ref{eq: f equations}d}
\end{subnumcases}

For the case $\mathcal E_{ij}\left(0\right)\in\operatorname{span}_{\mathbb R}\mathcal Z$ then the exists the trivial solution:
\begin{subnumcases}{}
    f(\tau)=\tau,\\
    \phi_i(\tau)\text{ is any function},\\
    \phi_j(\tau)\text{ is any function},\\
    \Phi_i'(\tau)=\Phi_i(\tau),\\
    \Phi_j'(\tau)=\Phi_j(\tau).
\end{subnumcases}

If $\mathcal E_{ij}\left(0\right)\in\operatorname{span}_{\mathbb R}\mathcal Z\cup\mathcal C_{\pm}$ or $\mathcal E_{ij}\left(0\right)\in\operatorname{span}_{\mathbb R}\mathcal Z\cup\mathcal Q_k$ we will proceed by differentiate \cref{eq: f equations} to find
\begin{subnumcases}{}
    \Delta^{\pm}\dot\phi_{ij}\left(\tau\right)=\dv{f}{\tau}\left[\Delta^{\pm}\dot\phi_{ij}\circ f\left(\tau\right)+\tfrac{1}{\hbar}\Delta^{\pm}v_{ij}\circ f\left(\tau\right)\right]&$\mathcal E_{ij}\left(0\right)\in\operatorname{span}_{\mathbb R}\mathcal Z\cup\mathcal C_{\pm}$,\\
    \dot\phi_k\left(\tau\right)=\dv{f}{\tau}\left[\dot\phi_k\circ f\left(\tau\right)+\tfrac{1}{\hbar}v_k\circ f\left(\tau\right)\right]&$\mathcal E_{ij}\left(0\right)\in\operatorname{span}_{\mathbb R}\mathcal Z\cup\mathcal Q_k$.
\end{subnumcases}
By subtracting these from an appropriate linear combination of Eq.~(\ref{eq: ft constraints}b) for $k\in\left\{i,j\right\}$ we find to two new equations:
\begin{subnumcases}{\label{eq: pm flux}}
    \displaystyle\Delta^{\pm}\tilde\omega_k\circ\Phi_k'(\tau)=\dv{f}{\tau}\left[\Delta^{\pm}\tilde\omega_k\circ\Phi_k\circ f(\tau)+\tfrac{1}{\hbar}\Delta^{\pm}v_k\circ f\left(\tau\right)\right],\\
    \displaystyle\Delta^{\mp}\tilde\omega_{\bar k}\circ\Phi_{\bar k}'(\tau)-\Delta^{\mp}\dot\phi_{\bar k}(\tau)=\dv{f}{\tau}\left[\Delta^{\mp}\tilde\omega_{\bar k}\circ\Phi_{\bar k}\circ f(\tau)-\Delta^{\mp}\dot\phi_{\bar k}\circ f(\tau)\right],
\end{subnumcases}
if $\mathcal E_{ij}\left(0\right)\in\operatorname{span}_{\mathbb R}\mathcal Z\cup\mathcal C_{\pm}$, and
\begin{subnumcases}{\label{eq: no change flux}}
    \displaystyle\tilde\omega_k\circ\Phi_k'(\tau)=\dv{f}{\tau}\left[\tilde\omega_k\circ\Phi_k\circ f(\tau)+\tfrac{1}{\hbar}v_k\circ f\left(\tau\right)\right]\\
    \displaystyle\tilde\omega_{\bar k}\circ\Phi_{\bar k}'(\tau)-\dot\phi_{\bar k}(\tau)=\dv{f}{\tau}\left[\tilde\omega_{\bar k}\circ\Phi_{\bar k}\circ f(\tau)-\dot\phi_{\bar k}\circ f(\tau)\right]
\end{subnumcases}
if $\mathcal E_{ij}\left(0\right)\in\operatorname{span}_{\mathbb R}\mathcal Z\cup\mathcal Q_k$.

Notice we have now decoupled our equations such that, given a function $f(\tau)$, we can compute every other function using a series of equations that only contain one unknown function. We represent this in the diagrams below, where each function at a vertex can be computed using the functions at the start of all incoming arrows using the equation that labels them:
\begin{equation}
    \begin{tikzcd}
    f\arrow[d, "\text{\cref{eq: f equations}}", leftrightarrow]\arrow[r, "\text{Eq.~(\ref{eq: pm flux}a)}", leftrightarrow]\arrow[rr, "\text{Eq.~(\ref{eq: ft constraints}c)}", bend left=50]\arrow[rrr, "\text{Eq.~(\ref{eq: ft constraints}c)}", bend left=60]&\Delta^{\pm}\tilde\omega\circ\Phi'\arrow[r, "\text{Eq.~(\ref{eq: ft constraints}c)}"]&\Delta^{\mp}\tilde\omega\circ\Phi'\arrow[r, "\text{Eq.~(\ref{eq: pm flux}b)}"]&\Delta^{\mp}\phi_{ij}\\
    \Delta^{\pm}\phi_{ij}
    \end{tikzcd}
\end{equation}
if $\mathcal E_{ij}\left(0\right)\in\operatorname{span}_{\mathbb R}\mathcal Z\cup\mathcal C_{\pm}$, and
\begin{equation}
    \begin{tikzcd}
    f\arrow[d, "\text{\cref{eq: f equations}}", leftrightarrow]\arrow[r, "\text{Eq.~(\ref{eq: no change flux}a)}", leftrightarrow]\arrow[rr, "\text{Eq.~(\ref{eq: ft constraints}c)}", bend left=70]\arrow[rrr, "\text{Eq.~(\ref{eq: ft constraints}c)}", bend left=80]&\tilde\omega_k\circ\Phi_k'\arrow[r, "\text{Eq.~(\ref{eq: ft constraints}c)}"]&\tilde\omega_{\bar k}\circ\Phi_{\bar k}'\arrow[r, "\text{Eq.~(\ref{eq: no change flux}b)}"]&\phi_{\bar k}\\
    \phi_k
    \end{tikzcd}
\end{equation}
if $\mathcal E_{ij}\left(0\right)\in\operatorname{span}_{\mathbb R}\mathcal Z\cup\mathcal Q_k$.

Notice we can compute the frequencies (or fluxes) from $f$ without ever needing to compute the rotating frame. As we have four equations for five functions, we are free to choose one function. For ease, we choose $f$. It is straightforward to then solve for $\Delta^{\pm}\tilde\omega\circ\Phi'$ or $\tilde\omega_k\circ\Phi_k'$, respectively, by evaluating the functions at each time of interest $\tau$. However, it is harder to solve \cref{eq: f equations} as both depend on the unknown function at multiple points in time. Fortunately, if we choose $f(\tau)\le\tau$ or $f(\tau)\ge\tau$ then we can solve \cref{eq: f equations} as follows: We will consider the case $\mathcal E_{ij}\left(0\right)\in\operatorname{span}_{\mathbb R}\mathcal Z\cup\mathcal Q_k$ for simplicity but the method can be used for \cref{eq: f equations} with $\mathcal E_{ij}\left(0\right)\in\operatorname{span}_{\mathbb R}\mathcal Z\cup\mathcal C_{\pm}$. First, we generate an array of times $\tau_{i+1}=f(\tau_i)$ recursively. If $f(\tau)\le\tau$ then $\tau_{i+1}$ increases monotonically. Alternatively, if $f(\tau)\ge\tau$ then $\tau_{i+1}$ decreases monotonically. Substituting $\tau=\tau_{i}$ and $f(\tau_i)=\tau_{i+1}$ into \cref{eq: f equations} we find
\begin{equation}
    \phi_k\left(\tau_{i}\right)=\phi_k\left(\tau_{i+1}\right)+\tfrac{1}{\hbar}V_k\left(\tau_{i+1}\right)\mod 4\pi.
\end{equation}
Now we can evaluate $\phi_k$ for all $\tau_i$ recursively. Thus, given an initial point $\tau_0$ and either boundary condition $\phi_k(\tau_0)$ or $\phi_k(\tau_{\max})$ we can evaluate $\phi_k$ at a finite number of points. We can then linearly interpolate these points for a numerical solution. To increase the resolution of our numerical solution, we need only repeat this process with a different $\tau_0$ to obtain a new array of points and so more values of $\phi_k$ between which we can interpolate. To choose the boundary conditions, we can employ one of two procedures. If $f(\tau)\ge\tau$ then we can choose $\tau_0$ sufficiently small that $\phi_k(\tau_0)\approx0$. Alternatively, if $f(\tau)\le\tau$ we note that the sequence $\tau_{i+1}=f(\tau_i)$ will decrease to a value $\tau_*$ for which $f(\tau_*)=\tau_*$. We can then take sufficiently many recursive steps so that we can use $\phi_k(\tau_{\max})\approx\phi_k(\tau_*)=\tau_*$.

Next, we turn to solving Eq.~(\ref{eq: ft constraints}c). This will depend on the exact form of the coupling $J_{ij}(\tilde\omega_i,\tilde\omega_j)$. Here we consider two examples: (i) For directly-coupled capacitive qubits $J_{ij}(\tilde\omega_i,\tilde\omega_j)\propto\sqrt{\tilde\omega_i\tilde\omega_j}$ \cite{10.1063/1.5089550}. (ii) For capacitive couplings mediated through bus resonators $J_{ij}(\tilde\omega_i,\tilde\omega_j)\propto\frac{1}{\tilde\omega_i-\omega_r}+\frac{1}{\tilde\omega_j-\omega_r}$ where $\omega_r$ is the bus resonator frequency \cite{10.1063/1.5089550}. Both of these leave Eq.~(\ref{eq: ft constraints}c) as a quadratic equation for the unknown function in terms of the known functions, which can readily be solved.

Finally, we can solve Eqs.~(\ref{eq: no change flux}b) and (\ref{eq: pm flux}b) with the method developed for \cref{eq: f equations} before numerically integrating the resulting function.

\section{Dilation optimisation}
\label{app: dilation optimisation}

To optimise our dilation to minimise pulse distortion, in \cref{fig: ft sc example}, we utilise the following cost functional:
\begin{equation}
    C[f;\lambda]\coloneqq\frac{1}{\left|f(T)-f(0)\right|}\int\limits_{f^{-1}(0)}^{f^{-1}(T)}\dd{\tau}\left[\sqrt{\sum_{i=1}^2\left[\tilde\omega'_i(\tau)-\tilde\omega_i(\tau)\right]^2}+\lambda_0\operatorname{ReLU}(\tau-f(\tau))+\sum_{i=1}^2\lambda_i\operatorname{ReLU}(\omega_{i0}-\tilde\omega'_i(\tau))\right],
\end{equation}
where $\operatorname{ReLU}(x)\coloneqq \max\left\{0,x\right\}$. The first term minimises the pulse distortion. The second term employs a Lagrange multiplier $\lambda_0$ to enforce $f(\tau)\ge\tau$. The third term employs the Lagrange multipliers $\lambda_1$ and $\lambda_2$ to ensure the distorted qubit frequencies $\left\{\tilde\omega'_i(\tau))\right\}_{i=1}^2$ are greater than the minimum qubit frequencies $\left\{\omega_{i0}\right\}_{i=1}^2$. To discretise our solution space, we take $f(\tau)$ to have the following form,
\begin{equation}
    f(\tau)=\tau+a\tau^2+b\tau^3+\sum_{m=4}^Mc_m\tau^m,
\end{equation}
where $\tau_1$ and $\left\{c_m\in\mathbb R\right\}_{m=4}^M$ are free parameters and
\begin{align}
    a&=\frac{3-3A-\tau_1(B-3)}{\tau_1^2} & b&=\frac{2A-\tau_1(B-2)-2}{\tau_1^3}
\end{align}
and
\begin{align}
    A&=\sum_{m=4}^Mc_m\tau_1^m & B&=\sum_{m=4}^Mmc_m\tau_1^{m-1}.
\end{align}
These hard constrainsts force $\dv{f}{\tau}=1$ at $\tau=f^{-1}(0),f^{-1}(T)$ and $T=f(\tau_1)$. We choose $M=11$ for our simulations. This yields a 12-dimensional parameter space.

To perform the optimisation, we utilise COBYLA \cite{powellDirectSearchOptimization1994,zaikun_zhang_2023_8052655} with the additional constraints that $\lambda_i\ge 1$ for $i\in\left\llbracket0,2\right\rrbracket$.

\section{Normalizer group of the Pauli-\texorpdfstring{$Z$}{Z} rotations}
\label{app: normalizer}

In this appendix, we find the generators of the normaliser group of the Pauli-$Z$ rotations.

\begin{lemma}
Let $L\coloneqq\exp(i\operatorname{span}_{\mathbb R}\left\{Z_i\right\}_{i=1}^n)$. The normalizer group $\mathcal N\coloneqq\left\{u\in\operatorname{U}(2^n):uL=Lu\right\}$ is generated by $X_1$, $\left\{\operatorname{SWAP}_{i,i+1}\right\}_{i=1}^{n-1}$, and arbitrary phase gates, $P\coloneqq\exp(i\operatorname{span}_{\mathbb R}\left\{I,Z\right\}^{\otimes n})$.
\end{lemma}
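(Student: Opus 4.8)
The plan is to reduce the group-theoretic condition $uL=Lu$ to a linear-algebraic rigidity statement about how $u$ conjugates the generators $Z_i$, and then to recognise the resulting structure as a split extension whose pieces are exactly the claimed generators. First I would note that, since $L$ is the connected abelian group $\exp(i\operatorname{span}_{\mathbb R}\{Z_i\})$ and its elements are unitary, the condition $uL=Lu$ is equivalent to $uLu^\dagger=L$, which in turn is equivalent to $\operatorname{Ad}_u$ preserving the Lie algebra, i.e.\ $u\operatorname{span}_{\mathbb R}\{Z_i\}u^\dagger=\operatorname{span}_{\mathbb R}\{Z_i\}$. Hence $u\in\mathcal N$ iff $uZ_iu^\dagger=\sum_{j=1}^n c_{ij}Z_j$ with all $c_{ij}\in\mathbb R$, for some real matrix $C=(c_{ij})$.

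The crucial step is to upgrade ``real linear combination'' to ``signed permutation''. Conjugation is an algebra homomorphism, so $uZ_i^2u^\dagger=(uZ_iu^\dagger)^2$; using $Z_i^2=I$ this gives $\left(\sum_j c_{ij}Z_j\right)^2=I$. Expanding and using that the products $\prod_{j\in S}Z_j$ are linearly independent and traceless for $S\ne\varnothing$, I would match coefficients to obtain $\sum_j c_{ij}^2=1$ and $c_{ij}c_{ik}=0$ for $j\ne k$. Thus every row of $C$ is a signed standard basis vector; since conjugation is invertible, $C$ is invertible, forcing $C$ to be a signed permutation matrix. This defines a group homomorphism $\Phi\colon\mathcal N\to B_n$ into the hyperoctahedral group of signed permutations of $n$ elements, via $\Phi(u)=C$.

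I would then identify the two ends of the extension $1\to\ker\Phi\to\mathcal N\xrightarrow{\Phi}\operatorname{im}\Phi\to1$. The kernel consists of $u$ with $uZ_iu^\dagger=Z_i$ for all $i$, i.e.\ the centraliser of $\{Z_i\}_{i=1}^n$; because the $Z_i$ separate all computational basis states, their joint eigenspaces are one-dimensional, so this centraliser is exactly the diagonal unitaries, which coincide with $P=\exp(i\operatorname{span}_{\mathbb R}\{I,Z\}^{\otimes n})$ (the $2^n$ operators $\{I,Z\}^{\otimes n}$ span all real diagonal matrices). For the image, I would check the claimed generators lie in $\mathcal N$ and compute their images: $\Phi(\operatorname{SWAP}_{i,i+1})$ is the adjacent transposition $(i,i+1)$ and $\Phi(X_1)$ is the sign flip on the first coordinate. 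As these generate all of $B_n$, the subgroup $G\coloneqq\langle X_1,\{\operatorname{SWAP}_{i,i+1}\}_{i=1}^{n-1},P\rangle$ surjects onto $B_n$ under $\Phi$.

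Finally I would close the argument with the standard preimage step: given any $u\in\mathcal N$, pick $g\in G$ with $\Phi(g)=\Phi(u)$; then $g^{-1}u\in\ker\Phi=P\subseteq G$, so $u\in G$ and $\mathcal N=G$, as claimed. I expect the main obstacle to be the rigidity step: extracting that $C$ must be a signed permutation (rather than merely an orthogonal matrix preserving the span) relies essentially on the multiplicative relation $Z_i^2=I$ rather than on any inner-product argument, and one must be careful that the off-diagonal strings $Z_jZ_k$ really are linearly independent of the identity so that coefficient matching is valid. Correctly pinning the kernel down to exactly $P$, and no larger centraliser, is the other point requiring care.
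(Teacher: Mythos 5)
Your proof is correct, and it takes a genuinely different route from the paper's. The paper works entrywise at the group level: writing the normaliser condition as $lul'=u$ with $l,l'\in L$ diagonal, it first shows every $u\in\mathcal N$ is a monomial matrix (a single nonzero entry per column), hence $\mathcal N=P\cdot H$ with $H$ the subgroup of permutation matrices normalising $L$, and then pins down $H$ by comparing exponents modulo $2\pi$, concluding that such permutations act on bitstrings by permuting and flipping bits. You instead linearise immediately: $uLu^\dagger=L$ iff $\operatorname{Ad}_u$ preserves $\operatorname{span}_{\mathbb R}\{Z_i\}$, then the relation $(uZ_iu^\dagger)^2=I$ together with linear independence of the Pauli strings forces the coefficient matrix $C$ to be a signed permutation, giving the sequence $1\to P\to\mathcal N\to B_n\to 1$, whose image is generated by $\Phi(X_1)$ and the $\Phi(\operatorname{SWAP}_{i,i+1})$ and whose kernel (the centraliser of all $Z_i$, i.e.\ the diagonal unitaries) is exactly $P$. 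Each route buys something. Yours avoids all modular arithmetic, and your $Z_i^2=I$ rigidity step supplies precisely the content that is most compressed in the paper: the paper's displayed manipulation extracts only the complementation identity $\overline{[s(x)]_i}=[s(\bar x)]_i$, and for $n\ge3$ commuting with global complementation is strictly weaker than acting by bit permutations and flips (the centraliser of the complementation involution in the symmetric group on $2^n$ points has order $2^{2^{n-1}}(2^{n-1})!$, far exceeding $2^n n!$), so the paper's final ``therefore'' implicitly appeals to the full constraint rather than to the identity actually derived; your argument closes that gap by construction. Conversely, the paper's approach is more elementary: it never invokes the correspondence between connected Lie subgroups and their Lie algebras, on which your opening equivalence $uLu^\dagger=L\iff u\operatorname{span}_{\mathbb R}\{Z_i\}u^\dagger=\operatorname{span}_{\mathbb R}\{Z_i\}$ quietly relies and which deserves a sentence of justification (e.g.\ that $L$ is a closed subtorus, so the subgroup--subalgebra correspondence applies). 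One cosmetic fix: as defined, your $\Phi$ is an anti-homomorphism, since $C_{uv}=C_vC_u$; define $\Phi(u)$ via $u^\dagger Z_iu$ or transpose $C$ --- the kernel--image--preimage argument is unaffected.
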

\begin{proof}
In the computational basis, all elements of $L$ are diagonal. Thus for some $u\in\operatorname{U}(2^n)$ and $l\in L$ we find in the computational basis that
\begin{equation}
    [lul']_{ab}\equiv\sum_{c,d=1}^{2^n}l_{ac}u_{cd}l_{db}=l_{aa}u_{ab}l_{bb}'\qquad\forall a,b\in\left\llbracket2^n\right\rrbracket.
\end{equation}
Thus, as $L$ is a group, for all $u\in\mathcal N$ and $l\in L$ there exists $l'\in L$ such that
\begin{equation}
    l_{aa}u_{ab}l_{bb}'=u_{ab}\qquad\forall a,b\in\left[1,2^n\right].
\end{equation}
Thus, for all $a,b\in\left\llbracket2^n\right\rrbracket$ such that $u_{ab}\ne0$ we find $l_{bb}'=1/l_{aa}$. Therefore, $u\in\mathcal N$ can only have a single non-zero entry per column in the computational basis. As $\mathcal N\leq\operatorname{U}(2^n)$, then $\mathcal N$ must be a subgroup of the group $G$ generated by $P$ and the $2^n\times 2^n$ permutation matrices, $\operatorname{S_{2^n}}$. As $P$ is an abelian group of which $L$ is a subgroup, then $P\le\mathcal N$. Thus, $\mathcal N$ is generated by $P$ and a subgroup $H$ of $\operatorname{S_{2^n}}$.

To find $H\le\operatorname{S_{2^n}},G$ we first note that any element $u\in\mathcal N\le G$ can be expressed as $u=ps$ where $p\in P$ and $s\in\operatorname{S_{2^n}}$. Thus, substituting into $lul'=u$ and using the fact $p$ commutes with $l$ and $l'$ we find
\begin{align}
    \mathcal N\coloneqq P\cdot\left\{s\in\operatorname{S_{2^n}}:sL=Ls\right\},
\end{align}
\textit{i.e.}, $H=\left\{s\in\operatorname{S_{2^n}}:sL=Ls\right\}$ and $\mathcal N=P\cdot H$. Next by expressing the elements of $L$ as $\exp(i\sum_{i=1}^n\theta_iZ_i)$ where $\theta_i\in\left[0,2\pi\right)$ we can express the constraint on $s\in\operatorname{S_{2^n}}$ as
\begin{equation}
    \forall\vec\theta\in\left[0,2\pi\right)^n,\ \exists \phi\in\left[0,2\pi\right)^n\text{ such that }s\exp(i\sum_{i=1}^n\theta_iZ_i)s^{-1}=\exp(i\sum_{i=1}^n\phi_iZ_i).
\end{equation}
Manipulating the constraint, we find
\begin{align}
    s\exp(i\sum_{i=1}^n\theta_iZ_i)s^{-1}&=\exp(i\sum_{i=1}^n\phi_iZ_i)\\
    \iff\sum_{i=1}^n(-1)^{[s(x)]_i}\theta_i&=\sum_{i=1}^n(-1)^{x_i}\phi_i\mod 2\pi\qquad\forall x\in\mathbb Z_2^n\\
    &=-\sum_{i=1}^n(-1)^{\bar x_i}\phi_i\mod 2\pi\qquad\forall x\in\mathbb Z_2^n\\
    &=-\sum_{i=1}^n(-1)^{[s(\bar x)]_i}\theta_i\mod 2\pi\qquad\forall x\in\mathbb Z_2^n\\
    &=\sum_{i=1}^n(-1)^{\overline{[s(\bar x)]_i}}\theta_i\mod 2\pi\qquad\forall x\in\mathbb Z_2^n,
\end{align}
where $\bar x_i$ is the complement bit to $x_i$ and $s(x)$ is the action of the permution $s$ on the set of $n$-bitstring. As the constraint must hold for all $\vec\theta$ we conclude that $\overline{[s(x)]_i}=[s(\bar x)]_i$ for all $n$-bitstrings $x$. Therefore, only permutations $s$ that permute the bits and flip the bits satisfy the constraint. Thus, $H$ is a subgroup of the group generated by $X_1$ and $\left\{\operatorname{SWAP}_{i,i+1}\right\}_{i=1}^{n-1}$. As $X_1L=LX_1$ and $\operatorname{SWAP}_{i,i+1}L=L\operatorname{SWAP}_{i,i+1}$ for all $i\in\left[1,n-1\right]$ we conclude that $H$ is generated by $X_1$ and $\left\{\operatorname{SWAP}_{i,i+1}\right\}_{i=1}^{n-1}$.

\end{proof}

\end{document}